\pgfplotsset{compat=1.18}
\newcommand{\ubar}[1]{\underaccent{\bar}{#1}}
\DeclareMathOperator*{\argmax}{arg\,max}
\DeclareMathOperator*{\argmin}{arg\,min}
\newtheorem{theorem}{Theorem}
\newtheorem{assumption}{Assumption}
\newtheorem{corollary}{Corollary}
\newtheorem{definition}{Definition}
\newtheorem{lemma}{Lemma}
\newtheorem{proposition}{Proposition}
\providecommand{\keywords}[1]
{
  \small	
  \textbf{Keywords---} #1
}
\providecommand{\JEL}[1]
{
  \small	
  \textbf{JEL---} #1
}
\begin{document}
\newcommand{\setsize}[1]{{\left|#1\right|}}

\newcommand{\floor}[1]{
{\lfloor {#1} \rfloor}
}
\newcommand{\bigfloor}[1]{
{\left\lfloor {#1} \right\rfloor}
}

%
%
\newcommand{\given}{\,\middle|\,}
\newcommand{\wgiven}{\,\mid\,}

\newcommand{\prob}[2][]{\text{\bf Pr}\ifthenelse{\not\equal{}{#1}}{_{#1}}{}\!\left[{\def\givenn{\middle|}#2}\right]}
\newcommand{\expect}[2][]{\text{\bf E}\ifthenelse{\not\equal{}{#1}}{_{#1}}{}\!\left[{\def\givenn{\middle|}#2}\right]}

\newcommand{\sprob}[2][]{\text{\bf Pr}\ifthenelse{\not\equal{}{#1}}{_{#1}}{}[#2]}
\newcommand{\sexpect}[2][]{\text{\bf E}\ifthenelse{\not\equal{}{#1}}{_{#1}}{}[#2]}

\newcommand{\lbr}[1]{\left\{#1\right\}}
\newcommand{\rbr}[1]{\left(#1\right)}
\newcommand{\cbr}[1]{\left[#1\right]}

\newcommand{\suchthat}{\,:\,}

\newcommand{\partialx}[2][]{{\tfrac{\partial #1}{\partial #2}}}
\newcommand{\nicepartialx}[2][]{{\nicefrac{\partial #1}{\partial #2}}}
\newcommand{\dd}{{\,\mathrm d}}
\newcommand{\ddx}[2][]{{\tfrac{\dd #1}{\dd #2}}}
\newcommand{\niceddx}[2][]{{\nicefrac{\dd #1}{\dd #2}}}
\newcommand{\grad}{\nabla}

\newcommand{\symdiff}{\triangle}
\newcommand{\abs}[1]{\left|#1\right|}
\newcommand{\indicate}[1]{{\bf 1}\left[#1\right]}
\newcommand{\reals}{\mathbb{R}}
\newcommand{\posreals}{\reals_+}
\newcommand{\supp}{\text{supp}}

\newcommand{\inftynorm}[1]{\left\lVert#1\right\rVert_{\infty}}

\newcommand{\conv}{{\rm conv}}
\newcommand{\tech}{t}
\newcommand{\techs}{\mathcal{T}}
\newcommand{\util}{u}
\newcommand{\val}{v}
\newcommand{\type}{T}
\newcommand{\types}{\mathcal{R}}

\newcommand{\outcomes}{\mathcal{Y}}
\newcommand{\maxOutcome}{\bar{y}}

\newcommand{\alloc}{\alpha}
\newcommand{\posterior}{\mu}
\newcommand{\allPosterior}{\Delta \states}
\newcommand{\regret}{R}
\newcommand{\generalRegret}{\regret_{\gamma}}
\newcommand{\dist}{F}
\newcommand{\prior}{\rho}
\newcommand{\state}{\theta}
\newcommand{\states}{\Theta}

\newcommand{\primed}{^\dagger}
\newcommand{\dprimed}{^\ddagger}

\newcommand{\allInfo}{\Sigma}
\newcommand{\info}{\pi}
\newcommand{\actions}{\mathcal{A}}
\newcommand{\action}{a}
\newcommand{\quota}{q}
\newcommand{\availableInfo}{\Pi}
\newcommand{\noInfo}{\info^N}

\newcommand{\indiff}{m^*}
\newcommand{\priorMean}{m_{\prior}}

\newcommand{\feasibleStrategy}{\Psi}
\newcommand{\optV}{V_P^*}
\newcommand{\firstBest}{U_{\rm F}}

\renewcommand{\epsilon}{\varepsilon}
\newcommand{\completeInfo}{u_c}
\newcommand{\uppercurve}{\bar{u}_c}
\newcommand{\lowercurve}{\underline{u}_c}
\newcommand{\promise}{U}
\newcommand{\optv}{V}
\newcommand{\opt}{{\rm OPT}}
\newcommand{\utilmin}{\ubar{u}}
\newcommand{\upperfeasible}{\bar{u}_f}
\newcommand{\lowerfeasible}{\underline{u}_f}
\newcommand{\default}{T_0}
\newcommand{\principalOpt}{a^p}

\begin{titlepage}

\title{Screening for Choice Sets\thanks{Yingkai Li thanks the NUS Start-up Grant for financial support. We thank Joyee Deb, Piotr Dworczak, Nima Haghpanah, Marina Halac, Elliot Lipnowski, Harry Pei, Kai Hao Yang, and Jidong Zhou for helpful comments and suggestions.}}
\author{Tan Gan\thanks{Department of Managerial Economics and Strategy, London School of Economics. Email: \texttt{t.gan2@lse.ac.uk}}
\and Yingkai Li\thanks{Department of Economics, National University of Singapore.
Email: \texttt{yk.li@nus.edu.sg}}}

\date{}
\maketitle
\begin{abstract}
\noindent

We study a screening problem in which an agent privately knows which actions or technologies are feasible and can disclose only a subset to a principal. Once disclosed, feasible options are verifiable and their payoff consequences are publicly known, so private information concerns feasibility rather than payoffs, misreporting restricts the principal's choices directly rather than distorting her beliefs. Assuming feasible sets are ordered by inclusion, we establish a simple characterization of the optimal mechanism, where the principal either behaves as if there is no asymmetric information or locally provides no reward for better proposals. We derive comparative statics and illustrate the framework in applications to managing persuasion, action elicitation, and production-technology elicitation.

\end{abstract}

\keywords{screening, choice sets, set inclusion order}

\JEL{D81, D82, D86}

\setcounter{page}{0}
\thispagestyle{empty}
\end{titlepage}
\pagebreak \newpage

\section{Introduction} \label{sec:introduction}
Screening and mechanism design are central pillars of modern economic theory, traditionally modeling private information as payoff-relevant heterogeneity over a commonly known set of feasible actions. In such Myersonian environments, agents possess hidden information about preferences, costs, or states, and incentive constraints arise because different types value the same allocation differently. Yet in many economic situations, the principal understands the payoff consequences of feasible actions but does not know which actions are feasible, relying instead on the agent to identify or disclose them. Engineers propose designs/algorithms, managers propose business strategies, firms propose experimental tests to regulators, and civil servants propose policy reforms. In each case, proposed options are verifiable once revealed, but feasible alternatives can be concealed beforehand. Misreporting therefore affects the principal's opportunity set rather than her beliefs about payoffs, a strategic tension that standard screening models are not designed to capture. This paper develops a screening framework in which private information concerns feasibility sets rather than payoff parameters, and studies how a principal should commit to decision rules that discipline strategic disclosure of feasible choice sets.

We formulate a screening model in which the agent privately observes a set of feasible technologies and may report only a subset of these technologies to the principal. The principal always has access to a default set of technologies, which induces a default choice set of payoff vectors—pairs $(u,v)$ representing the agent's and principal's payoffs—from which she can choose even if the agent reports no additional technologies. Each reported technology set expands this baseline and induces a larger choice set of payoff vectors from which the principal selects upon receiving the report. Unlike classical Myersonian screening models, the agent's private information here does not concern payoff-relevant parameters but the set of feasible technologies itself. This distinction has two fundamental consequences. First, by concealing feasible technologies, the agent's report directly alters the principal's choice set and hence her attainable payoff, even when the principal perfectly knows the agent's type. Second, once a feasible option is disclosed, its payoff consequences are publicly known, so no inference problem remains. In particular, when an agent with a larger feasible technology set mimics an agent with a smaller feasible technology set, he obtains exactly the smaller-set agent's payoff rather than an information rent. The principal's design problem is therefore to commit ex ante to a decision rule mapping reported technology sets to payoff choices so as to maximize her expected payoff while inducing truthful revelation of the agent's full feasible set.

Modeling private information as knowledge of feasible choice sets rather than payoff parameters naturally entails screening over high-dimensional objects. Without additional structure, this creates severe tractability challenges. A key modeling assumption is that technology sets are ordered by inclusion: more capable agents have access to larger technology sets. This nested structure preserves tractability while capturing a wide range of environments, as we illustrate in the applications. Methodologically, our approach aligns with the recent trend in multidimensional screening such as \cite{Yang2025}, which imposes a complete order on the type space while allowing rich heterogeneity in other dimensions and permitting the principal to employ high-dimensional screening instruments.

After presenting the model, \cref{subsec:applications} discusses three applications to illustrate how the abstract notions of technologies and choice sets correspond to concrete economic environments. In problems of managing persuasion, a sender privately knows which experiments or information structures can be conducted, while the receiver can verify any disclosed experiment but relies on the sender to make feasible experiments available. In problems of action elicitation, an expert privately knows which detailed actions or reforms can be implemented, while the principal cannot identify or implement these actions without the expert proposing them. In problems of production-technology elicitation, a manager privately knows which production methods, projects, or business strategies are feasible, while the principal can evaluate and authorize any disclosed strategy but cannot access undisclosed ones. In each case, feasible options are verifiable once revealed, concealment restricts the principal's feasible choice set, and more capable agents naturally possess larger sets of feasible options. We will apply our main result to solve these examples later.

We begin our formal analysis by isolating the role of private access to technologies independently of private information. Even when the principal perfectly knows the agent's technology set, the agent can still conceal feasible technologies and thereby restrict the principal's choice set. This feature has no analogue in standard screening models, where under complete information, the principal can directly select her preferred allocation. We show that disciplining such concealment requires extreme off-path incentives: the optimal complete-information mechanism takes a ``shoot-the-agent'' form in which any report that withholds feasible technologies is punished by assigning the agent the lowest payoff attainable under the default set of technologies. As a result, private access to technologies imposes a single restriction on the principal: the agent must be guaranteed a minimum utility determined by the worst punishment under the default set of technologies. 

To organize the analysis that follows, we summarize the complete-information benchmark by the complete information curve, defined as the agent's equilibrium payoff in the optimal mechanism under the complete information benchmark. For each technology set, this payoff equals the agent's utility under the principal's preferred feasible outcome whenever that utility exceeds the punishment guaranteed by the default set of technologies, and it equals the punishment level otherwise. This curve is always pointwise attainable, but it need not be monotone in set inclusion.

We then introduce private information about technology sets. We show that the principal's mechanism design problem can be decomposed into two steps. First, the principal chooses a promised utility function, assigning each reported technology set a payoff for the agent. Second, given any such promised utility, the principal selects her payoff-maximizing point in the corresponding feasible choice set that delivers this promise. While the second step may be computationally non-trivial in some applications, it is conceptually straightforward. This decomposition reduces the screening problem to choosing a promised utility function subject to incentive compatibility and technological feasibility.

A key observation is that, under nested technology sets, incentive compatibility takes a particularly simple form: the promised utility function must be weakly increasing in set inclusion and must guarantee the agent at least his default-report payoff. Moreover,  \cref{lem:range_of_promised_util} shows that under nested technology sets it is without loss of generality to restrict attention to promised utility functions that lie within the monotone envelope of the complete information curve—that is, between its lower and upper monotone envelopes. This is illustrated in \cref{fig:monotone-closures}. Crucially, once attention is restricted to this envelope, technological feasibility ceases to pose additional constraints: every weakly increasing promised utility function within the envelope is feasible. As a consequence, the screening problem collapses to choosing a one-dimensional monotone promised utility function within an explicitly constructed envelope determined by the complete-information benchmark.

\begin{figure}[htb]  

\centering

\begin{subfigure}[t]{0.49\textwidth}
    \centering
    \begin{tikzpicture}[scale=0.7]
        \begin{axis}[
            width=12cm, height=6cm,
            axis lines=middle,
            xmin=0, xmax=13,
            ymin=0, ymax=4.5,
            legend pos=north west,
            legend style={font=\Large},
            ticks=none
        ]
        \addplot [blue, thick, smooth] coordinates {
            (0.00,0.00) (0.32,0.41) (0.63,0.78) (0.95,1.10) (1.26,1.33) (1.58,1.47) 
            (1.89,1.52) (2.21,1.47) (2.53,1.34) (2.84,1.15) (3.22,0.89) (3.54,0.68) 
            (3.85,0.50) (4.17,0.40) (4.48,0.37) (4.80,0.44) (5.11,0.61) (5.43,0.88) 
            (5.75,1.21) (6.06,1.60) (6.44,2.09) (6.76,2.48) (7.07,2.83) (7.39,3.11) 
            (7.70,3.30) (8.02,3.39) (8.34,3.39) (8.65,3.29) (8.97,3.13) (9.28,2.93) 
            (9.66,2.66) (9.98,2.47) (10.29,2.32) (10.61,2.26) (10.92,2.28) (11.24,2.40) 
            (11.56,2.62) (11.87,2.92) (12.19,3.29) (12.57,3.77)
        };
        \addlegendentry{$\completeInfo(\type)$}
        
        \addplot [red, dashed, very thick] coordinates {
            (0.00,0.00) (0.32,0.41) (0.63,0.78) (0.95,1.10) (1.26,1.33) (1.58,1.47) 
            (1.89,1.52) (2.21,1.52) (2.53,1.52) (2.84,1.52) (3.22,1.52) (3.54,1.52) 
            (3.85,1.52) (4.17,1.52) (4.48,1.52) (4.80,1.52) (5.11,1.52) (5.43,1.52) 
            (5.98,1.52) (6.06,1.60) (6.44,2.09) (6.76,2.48) (7.07,2.83) (7.39,3.11) 
            (7.70,3.30) (8.02,3.39) (8.34,3.40) (8.65,3.40) (8.97,3.40) (9.28,3.40) 
            (9.66,3.40) (9.98,3.40) (10.29,3.40) (10.61,3.40) (10.92,3.40) (11.24,3.40) 
            (11.56,3.40) (11.87,3.40) (12.27,3.40) (12.57,3.77)
        };
        \addlegendentry{$\uppercurve(\type)$}

        \addplot [green!60!black, dashed, very thick] coordinates {
            (0.00,0.00) (0.32,0.37) (0.63,0.37) (0.95,0.37) (1.26,0.37) (1.58,0.37) 
            (1.89,0.37) (2.21,0.37) (2.53,0.37) (2.84,0.37) (3.22,0.37) (3.54,0.37) 
            (3.85,0.37) (4.17,0.37) (4.48,0.37) (4.80,0.44) (5.11,0.61) (5.43,0.88) 
            (5.75,1.21) (6.06,1.60) (6.44,2.09) (6.6,2.25) (7.07,2.25) (7.39,2.25) 
            (7.70,2.25) (8.02,2.25) (8.34,2.25) (8.65,2.25) (8.97,2.25) (9.28,2.25) 
            (9.66,2.25) (9.98,2.25) (10.29,2.25) (10.61,2.25) (10.92,2.28) (11.24,2.40) 
            (11.56,2.62) (11.87,2.92) (12.19,3.29) (12.57,3.77)
        };
        \addlegendentry{$\lowercurve(\type)$}
        \end{axis}
    \end{tikzpicture}
    \caption{Non-monotone function $\completeInfo(\type)$ with its upper monotone closure $\uppercurve(\type)$ and lower monotone closure $\lowercurve(\type)$.}
    \label{fig:monotone-closures}
\end{subfigure}
\hfill   
\begin{subfigure}[t]{0.49\textwidth}
    \centering
    \begin{tikzpicture}[scale=0.7]
        \begin{axis}[
            width=12cm, height=6cm,
            axis lines=middle,
            xmin=0, xmax=13,
            ymin=0, ymax=4.5,
            legend pos=north west,
            legend style={font=\Large},
            ticks=none
        ]
        \addplot [blue, thick, smooth] coordinates {
            (0.00,0.00) (0.32,0.41) (0.63,0.78) (0.95,1.10) (1.26,1.33) (1.58,1.47) 
            (1.89,1.52) (2.21,1.47) (2.53,1.34) (2.84,1.15) (3.22,0.89) (3.54,0.68) 
            (3.85,0.50) (4.17,0.40) (4.48,0.37) (4.80,0.44) (5.11,0.61) (5.43,0.88) 
            (5.75,1.21) (6.06,1.60) (6.44,2.09) (6.76,2.48) (7.07,2.83) (7.39,3.11) 
            (7.70,3.30) (8.02,3.39) (8.34,3.39) (8.65,3.29) (8.97,3.13) (9.28,2.93) 
            (9.66,2.66) (9.98,2.47) (10.29,2.32) (10.61,2.26) (10.92,2.28) (11.24,2.40) 
            (11.56,2.62) (11.87,2.92) (12.19,3.29) (12.57,3.77)
        };
        \addlegendentry{$\completeInfo(\type)$}

        \addplot [purple, very thick] coordinates {
            (0.00,0.00) (0.32,0.41) (0.63,0.78) (0.95,1.10) (1.26,1.10)
            (1.26,1.10) (1.58,1.10) (1.89,1.10) (2.21,1.10) (2.53,1.10) (2.84,1.10) (3.22,1.10) (3.54,1.10)
            (3.85,1.10) (4.17,1.10) (4.48,1.10) (4.80,1.10) (5.11,1.10) (5.43,1.10) (5.65,1.10)
            (5.65,1.10) (6.06,1.60) (6.44,2.09) (6.76,2.48) (7.07,2.83) (7.39,3.11)
            (7.39,3.11) (7.70,3.11) (8.02,3.11) (8.34,3.11) (8.65,3.11) (8.97,3.11) (9.28,3.11)
            (9.66,3.11) (9.98,3.11) (10.29,3.11) (10.61,3.11) (10.92,3.11) (11.24,3.11)
            (11.56,3.11) (11.87,3.11) (12.04,3.11)
            (12.04,3.11) (12.57,3.77)
        };
        \addlegendentry{$\promise(\type)$}
        \end{axis}
    \end{tikzpicture}
    \caption{Optimal promised utility $\promise(\type)$ given the complete information curve $\completeInfo(\type)$.}
    \label{fig:promised-utility}
\end{subfigure}

\caption{Illustration of monotone closures and optimal promised utility.}
\label{fig:full_info_bound}

\end{figure}
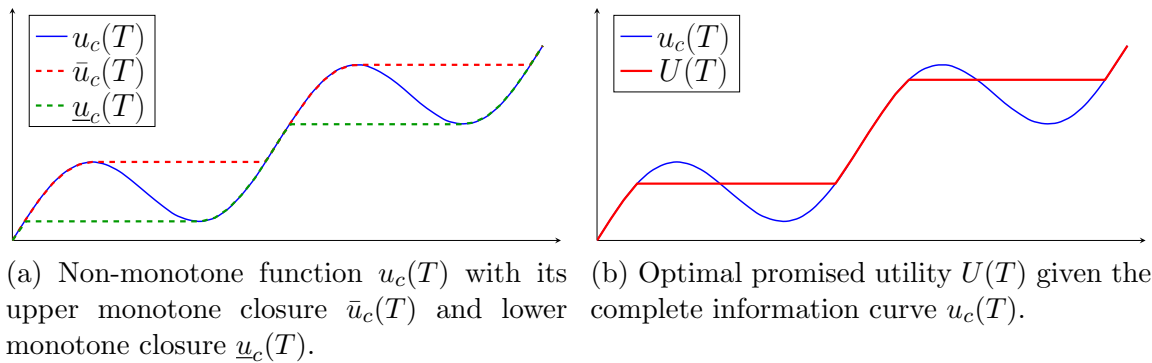

We further characterize the structure of the optimal mechanism. The principal's optimal policy exhibits a bang–bang structure. It partitions the type space into a finite collection of intervals. On each interval, either the principal implements the complete-information allocation, so each type receives exactly its complete-information payoff; or the principal keeps the agent's promised utility constant throughout the interval. This is illustrated in \cref{fig:promised-utility}.
Moreover, the number of flat segments is bounded by $K$, the number of decreasing segments of the complete-information curve. 
As a result, computing the optimal mechanism reduces to choosing at most $K$ pooling intervals and associated promise utilities that depart from the complete-information curve; in discrete cases, this computation can be implemented efficiently using dynamic programming.
The resulting bang–bang structure resembles institutional environments in which principals differentiate rewards finely across disclosed capabilities in some ranges but deliberately compress rewards in other ranges.

We then derive general comparative statics. The principal's equilibrium payoff increases when the distribution of agent capabilities improves in the sense of first-order stochastic dominance under set inclusion, and also when the principal's default set of technologies expands. 
The agent, by contrast, can be made weakly worse off---there exists an optimal selection of promised utilities under which his payoff weakly falls for every type---when the principal's default set of technologies expands, while the effect of improved capabilities is ambiguous. 
These results characterize when richer feasible technology sets benefit the principal and explain why enhanced agent capabilities may either help or hurt agents, depending on how they reshape the optimal bunching regions.

Finally, we apply the framework to three economically relevant environments: managing Bayesian persuasion when senders control experiment design, eliciting expert policy actions from informed civil servants, and contracting with CEOs who privately control additional business strategies under moral hazard. In each application, we show how the complete-information benchmark, its monotone envelope, and the resulting bang–bang structure deliver sharp and sometimes counterintuitive predictions—for example, why richer feasible technology sets may fail to raise agents' equilibrium payoffs, and how even a completely flat promised utility curve may generate rich economic predictions due to the nature of the optimization problem that we conceptually neglect in the general characterization.

\subsection{Literature} \label{sec:literature}

This paper relates to the literature on project selection, where an agent privately observes a set of feasible projects and may propose one to a principal who can verify the proposed project's characteristics and can reject or implement the project. \citet{ArmstrongVickers2010} provide the canonical model: the principal chooses a delegation or acceptance set, and the agent selects a feasible project within it. This captures one of our key frictions, as concealment restricts the principal's feasible set. However, for tractability, their paper assumes that an agent can report only one project and that the principal's policy must be deterministic—either accepting or rejecting. \citet{GuoShmaya2023} adopt a regret-minimization objective that helps them overcome some tractability issues and discuss the implications of allowing the agent to propose multiple projects simultaneously.

Our paper provides a unified yet tractable framework that embeds project proposals as a special case. In our model, the agent can report a subset of technologies, which impose structures over payoff pairs, instead of a single payoff pair or any arbitrary subset of payoff pairs. The principal can commit to taking any actions enabled by the reported technologies rather than simply accepting or rejecting a proposal. These modifications substantially expand the scope of the general model and enable us to study strategic proposals regarding signal structures, production technologies, and unknown actions.

This paper also relates to our earlier work \citet{BGL25}, which studies the specific example of managing persuasion, where the principal aims to discipline the strategic persuasion of an agent with private access to experiments/signals. To avoid tractability challenges, \citet{BGL25} adopts a robustness approach in which the principal minimizes worst-case regret over uncertainty in both the feasibility signal and the sender's partially aligned preferences. The present paper uses a classical Bayesian objective and focuses only on uncertainty over feasibility, thereby isolating the novel effect of screening on feasibility and providing a clearer contrast with the classical literature. 
\citet{CurelloSinander2025} analyze a dynamic disclosure problem in which the agent privately observes the arrival of a breakthrough that expands the feasible frontier. Since uncertainty concerns the timing of the breakthrough, their optimal mechanisms take the form of deadline mechanisms. Their characterization applies to a static variant in which the agent privately knows the arrival time from the outset, and this static variant can be embedded in our framework. An agent with an earlier arrival time has a larger choice set; the corresponding complete-information curve is strictly decreasing; and the optimal promised-utility curve is flat, so the optimal mechanism exhibits the same deadline structures.

Another relevant strand is mechanism design with verifiable evidence or partial disclosure, where agents may possess hard information that cannot be fabricated but can be withheld. Classic and modern contributions study how disclosure constraints affect implementability and mechanism structure (e.g., \citealp{GreenLaffont1986,BullWatson2007,DeneckereSeverinov2008,BenPorathDekelLipman2019}). Our model shares one similarity in that the ``message'' the agent can send depends on his type. 
However, in evidence models, the message is still ``cheap'' in the sense that it does not directly change payoffs. Upon observing the message, the principal must infer payoffs from the message structure, and this inference is often imperfect. In contrast, in our model, the agent can directly affect the principal's payoff by hiding technologies, and once technologies are reported (on or off path), the principal can always perfectly evaluate the consequences of her choices. 

Finally, the paper is related to the literature on multidimensional screening, which is generally intractable without strong structural assumptions. A recent and productive approach regains tractability by imposing an order on types. Our paper adopts this methodological idea—imposing an inclusion order on the feasible \emph{set} rather than on preferences. In our setting the principal chooses a payoff pair $(u,v)$, so the allocation is intrinsically two-dimensional even though types are ordered. As illustrated by recent excellent work such as \citet{Yang2025} and \citet{LoertscherMuir2025}, even when the agent's type is one-dimensional and satisfies monotonicity, the multidimensional nature of the allocation space makes the analysis nonstandard and nontrivial.  One theoretical contribution of our paper is to show that, under our specific assumptions, the problem admits a simple and tractable solution. That is not at all obvious a priori, even given a one-dimensional type space. That said, once the model is properly formulated, the techniques used to solve it are elementary relative to solving the general multidimensional screening problem; we discuss how they relate to and differ from ironing discussed in the literature. Thus, we view our main contribution as conceptual: the framework is novel, and it allows us to study a range of applications and derive interesting economic insights.

\section{Model} \label{sec:model}

\paragraph{Primitives}
We consider a principal-agent model of screening for choice sets. There is a compact set $\techs$ of all possible technologies $\tech$. For a set of technologies $\type \subseteq \techs$, we define 
\[C(\type): 2^\techs  \to   2^{\reals^2} \]
as the principal's choice set. That is, if the principal gets access to a set of technologies $T\subseteq \techs$, she will be able to choose a pair of payoffs $(u,v) \in C(T) \subseteq \reals^2$, where $(u,v)$ represents the payoffs to the agent and the principal, respectively. $C(T)$ is compact-valued and is monotone in set inclusion: $C(T') \subseteq C(T)$ for any $T'\subseteq T$.

The principal has a default set of technologies $\default \subset \techs$ from which she can choose, and she relies on the agent's report on additional technologies to expand her feasible choice set.
Each agent has private access to a subset of technologies $\type \subset \techs$, which is the agent's private information (type). Without loss of generality, we can always let $\default\subseteq \type$ so $C(\type)$ is the choice set that the principal faces if the type $\type$ agent truthfully reports all his technologies.
Let 
$$\types\equiv\{\type'\subseteq\techs: \default\subseteq\type'\}$$
be the type space of the agent. 

The principal does not observe the agent's private type $\type$ and holds a prior belief~$\dist$ with support $\types_{\dist}$. 
Due to the tractability challenge of high-dimensional screening problems, we primarily focus on the environment where the agent can be ranked by his capability: the more capable agent has a larger set of available technologies.
\begin{assumption}[Nested Technology Sets]
\label{assumption1}
The set of available technologies is nested. That is, for any $\type,\type'\in \types_{\dist}$, either $\type\subseteq \type'$ or $\type'\subseteq \type$.
\end{assumption}
As we will discuss through examples and analysis, this assumption is reasonable in many applications and creates tractability in high-dimensional screening problems while maintaining some degree of flexibility to capture the complex nature of choice sets.

\paragraph{Reports and Mechanisms}

An agent with type $\type$ can strategically report a subset $\type'$ such that $\default \subseteq \type' \subseteq \type$ to the principal. That is, he can conceal available technologies but can never falsify technologies that do not exist. This assumption is appropriate for technologies, as they are hard to invent but easy to verify. If the agent reports the default, then the principal can only choose from the default $\default$.

To discipline the strategic report of the agent, the principal can ex-ante commit to a decision rule $a(\type)$
\[  a(\type):  \types \to  \Delta C(\type) \]
that maps any reported set of technology, including reports outside the support of~$\dist$, to a distribution of payoff pairs within the feasible choice set induced by the technology. 
We assume both players are risk neutral and satisfy the von Neumann-Morgenstern expected utility representation. 
Therefore, since the principal can commit to arbitrary random actions, without loss of generality, we assume $a(\type)$ is a deterministic mapping to the expected payoff pairs such that
\[  a(\type):  \types \to   (u,v) \in \conv(C(\type)). \]
Throughout the rest of the paper, we assume that randomized actions are already included in the choice set. Hence, for every type $\type$, the set $C(\type)$ is convex and satisfies $C(\type)=\conv(C(\type))$.

Although the setting is different from the standard Myersonian screening models, the idea of the revelation principle still holds, and it is without loss of generality to focus on direct mechanisms where the agent truthfully reports his technology set $\type$. The objective of the mechanism design is:
\begin{align*}
    \max_{a(\cdot)} \quad& \expect[\dist]{a_2(\type)} \\
    \text{s.t.} \quad & a(\type)\in C(\type) \quad \forall \type\\
    & a_1(\type) \geq a_1(\type')  \quad \forall \type' \subset \type.
\end{align*}

\subsection{Applications}
\label{subsec:applications}
The language of our model is deliberately abstract and contains many layers to encompass various screening problems related to feasible choice sets. We start by discussing a few examples to illustrate how our general model relates to different applications. These applications are also stated in general terms. We will provide more detailed examples and characterize their solutions after the main result.

\paragraph{Managing Persuasion}
This problem concerns how a receiver with commitment power can discipline the strategic persuasion of a sender as in \citet{BGL25}. There is a state space $\Theta$ and an action space $A$. Both the sender's utility function $u(a,\theta)$ and the receiver's utility function $v(a,\theta)$ are publicly known. 
The receiver shares a common prior $G$ about the state $\theta$ with the sender.

The receiver's (principal's) default choice set is to take a (possibly random) action $a\in \Delta A$ for all states based on the prior $G$, so the choice set under default set of technologies is
\[ C(\default) = \{ (u,v) | ~ \exists a \in \Delta A, ~ \text{s.t.}~ u= \expect[\theta\sim G]{u(a,\theta)}, ~ v= \expect[\theta\sim G]{v(a,\theta)} \}. \]

The sender can conduct experiments to persuade the receiver, as in the literature on Bayesian persuasion \citep{kamenica2011bayesian}. An experiment consists of a signal space $S$ and a signal mapping $\sigma: \Theta \to S$. Given any experiment $(S,\sigma)$, the principal can adopt a (possibly random) strategy $a$ that maps from $S$ to $\Delta A$. A technology $t$ corresponds to an experiment $(S,\sigma)$. The choice set with type $\type$ is
\begin{align*}
    C(T) = \{ (u,v) | ~ &\exists (S,\sigma) \in T, ~ a: S \to \Delta A \,\\
    &~\text{s.t.} ~ u = \expect[\theta\sim G, s\sim\sigma(\theta)]{u(a(s),\theta)}, ~ v = \expect[\theta\sim G, s\sim\sigma(\theta)]{v(a(s),\theta)} \} .
\end{align*}

However, the agent may not have the ability to perfectly fine-tune the experiments. Thus, there is only a set of experiments available, which is captured by the agent's private type $\type$. The agent can strategically report a subset of available experiments. Anticipating this, the principal optimally commits to the choice of the experiment and her action upon the signal realization.

In this setting, \cref{assumption1} requires that the agent be ranked by how many experiments they can conduct: a more capable agent can conduct more experiments in terms of set inclusion. \cref{assumption1} does not impose any restrictions on what these experiments are.
Note that our general framework can also accommodate the cost of the experiment $c(\sigma)$, which may vary with the choice of the experiment. This simply shifts the technology $t$ downward by the cost of the experiment on the coordinate axis that represents the payoff of the agent.

\paragraph{Action Elicitation}
This problem concerns how a principal, who has a good understanding of the state but lacks the expertise or time to identify what detailed strategies are feasible, can elicit feasible actions from the agent.
There is a state space $\theta\in\Theta$ drawn from common prior $G$, and a signal structure $\sigma:\Theta \to S$ that is privately observed by the principal. An action $a: \Theta \to (u,v)$ is a mapping from states to payoff pairs. The principal has a known set of actions $A_0$ from which she can choose. Thus, her default choice set is
\begin{align*}
C(\default) = \{ (u,v) | ~ &\exists a: S \to \Delta A_0, ~ \\
&\text{s.t.}~ u= \expect[\theta\sim G,s\sim\sigma(\theta)]{u(a(s),\theta)}, ~ v= \expect[\theta\sim G,s\sim\sigma(\theta)]{v(a(s),\theta)} \}.
\end{align*}
The agent has private access to some additional actions, such as a detailed feasible schedule or a novel algorithm. He can choose to conceal information in his reports, but he cannot falsify an action, as the principal will be able to easily verify and understand the payoff consequences.
A technology $t$ corresponds to an additional action. The choice set with type $\type$ is
\begin{align*}
    C(T) = \{ (u,v) | ~ &\exists  a: S \to \Delta (A_0\cup T) , \\
    &~\text{s.t.} ~ u = \expect[\theta\sim G, s\sim\sigma(\theta)]{u(a(s),\theta)}, ~ v = \expect[\theta\sim G, s\sim\sigma(\theta)]{v(a(s),\theta)} \} .
\end{align*}
In this setting, \cref{assumption1} requires that the agent is ranked by how many additional actions they have private access to: a more capable agent has more actions in terms of set inclusion.

\paragraph{Eliciting Production Technologies}
In this example, we highlight that the subsequent interaction between the principal and the agent, after the agent reports his set of available technologies, can be sophisticated.
Consider the application where the board of a company (the principal) is contracting with a CEO (the agent). A business strategy $(G,c)$ consists of a distribution $G\in \Delta \reals$ over the company's profit and the cost of the agent. The company has an initial set of business strategies $\mathcal{G}_0$, from which the agent can privately choose. The principal can use a non-negative wage scheme $w(\cdot): \reals \to \reals^+$ to provide incentives for the agent. Thus, the principal's default choice set is
\begin{align*}
    C(\default) &= \{ (u,v) | ~ \exists w(\cdot), ~ (G,c) \in \mathcal{G}_0 , ~\text{s.t.} ~ v = \int y-w(y) \dd G(y),\\ & \qquad u = \int w(y)-c \dd G(y) \geq \max_{(G',c')\in \mathcal{G}_0} \int w(y)-c' \dd G'(y) \} .
\end{align*}

The agent may have some other business strategies, but unlocking them requires the principal's permission or even additional investment. A technology $\tech$ corresponds to a set of business strategies $\mathcal{G}_t$, which may not necessarily be a singleton. For example, it is possible that a technology simultaneously introduces a productive business strategy and a shirking business strategy, and they cannot be separated.

After the agent reports the available set of additional business strategies (technology) $\cup_{\tech \in \type} \mathcal{G}_\tech $,  the principal can permit a subset $\type'$ of them: $\mathcal{G}(\type') = \cup_{\tech \in \type' } \mathcal{G}_\tech  $. With such selective permission, the set of payoffs that she can induce via a wage scheme is
\begin{align*}
    \tilde{C}(\mathcal{G}(\type') ) &= \{ (u,v) | ~ \exists w(\cdot), ~ (G,c) \in \mathcal{G}_0 \cup \mathcal{G}(\type') , ~\text{s.t.} ~ v = \int y-w(y) \dd G(y),\\ & \qquad u = \int w(y)-c \dd G(y) \geq \max_{(G',c')\in \mathcal{G}_0 \cup \mathcal{G}(\type')} \int w(y)-c' \dd G'(y) \} .
\end{align*}

Note that due to the complex nature of incentive compatibility, $\tilde{C}(\mathcal{G}(\type') )$ may not be monotonic in set inclusion in $\type'$. However, because the principal has the freedom to choose what $\mathcal{G}'$ to permit and what wage scheme to provide. Her ultimate choice set is monotonic in set inclusion in $T$:
\begin{gather*}
    C(T) = \cup_{\type' \subseteq \type}    \tilde{C}(\mathcal{G}(\type') ).
\end{gather*}
In this example, \cref{assumption1} holds as long as a more capable agent has more available business plans.

\section{Analysis}
\subsection{Complete Information Benchmark}
The agent in our model not only has private information about what technology is available but also has private access to these technologies. To highlight the difference between these two elements, we start with the complete information benchmark where the principal's prior $\dist$ has a degenerate support $\types_{\dist}=\{ \type^* \}$ for a fixed and commonly known type $\type^*$.

Ideally, the principal wants the agent to report $\type^*$, after which the principal can select her favorite choice. However, because the agent has private access to the technologies, he can choose to report only a subset $\type\subseteq \type^*$, which limits the principal's choice set. This highlights the first difference between our framework and the standard Myersonian framework: even in the complete information benchmark, the agent can directly affect the principal's payoff by changing the principal's available options. 

To discipline the agent's strategic selection of technologies, it is without loss of generality for the principal to use the ``shoot-the-agent'' mechanism. That is, the principal commits to minimizing the agent's payoff whenever the agent reports $T\neq \type^*$. 
The value of this minimum payoff is clearly decreasing in $T$ in set inclusion. Thus, among all $T \neq \type^*$, the best choice for the agent is to report the default: $T= \default$. The corresponding value $\utilmin$, where
\begin{equation}
\label{eq_outsideoption}
     \utilmin = \min_{(u,v) \in C(\default)} u,
\end{equation}
is the minimum value that the principal has to offer to the agent on the equilibrium path to ensure that he reports $T$ truthfully. Therefore, we formally define the ``shoot-the-agent'' mechanism as
\begin{align*}
    a(T) \in  \begin{cases}
        \argmin_{(u,v) \in C(\type)} u    \quad & \text{ if } T \neq \type^* \\
        \argmax_{(u,v) \in C(\type), u \geq \utilmin } v   \quad & \text{ if } T =\type^*.
    \end{cases}
\end{align*}

\begin{proposition}
    \label{prop_completeinformation}
    In the complete information benchmark, ``shoot-the-agent'' is an optimal mechanism.
\end{proposition}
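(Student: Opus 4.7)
The plan is to split the proof into two directions: an upper bound on the principal's attainable payoff derived from incentive compatibility, and a verification that the proposed ``shoot the agent'' mechanism is both incentive compatible and attains this upper bound.

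First I would establish the binding constraint that private access to technologies places on any incentive-compatible mechanism. For any IC mechanism $a(\cdot)$, the type-$T_0$ agent can always mimic the report $\{\default\}$, so IC requires $a_1(T_0) \geq a_1(\{\default\})$. Since the decision rule must lie in $\conv(C(\{\default\}))$, we have $a_1(\{\default\}) \geq \utilmin$ by the definition in \eqref{eq_outsideoption}. Combining these, $a_1(T_0) \geq \utilmin$, and hence
\[
a_2(T_0) \;\leq\; \max\bigl\{\, v : (u,v) \in \conv(C(T_0)),\; u \geq \utilmin \,\bigr\},
\]
which is precisely the value the shoot-the-agent mechanism achieves on the equilibrium path. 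This step isolates the unique friction created by private access: the only binding restriction that remains under complete information is a reservation-utility constraint at level $\utilmin$.

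Next I would verify incentive compatibility of the proposed mechanism. On the equilibrium path, the agent receives $a_1(T_0) \geq \utilmin$ by construction. For any off-path report $T' \subsetneq T_0$ with $\default \in T'$, the agent obtains $\min_{(u,v) \in C(T')} u$. Because $C(\cdot)$ is monotone in set inclusion, this minimum is weakly decreasing in $T'$, and hence is maximized among proper subreports by choosing the smallest admissible report $T' = \{\default\}$, which yields exactly $\utilmin$. Thus every deviation gives at most $\utilmin \leq a_1(T_0)$, and IC holds. Combined with the first step, this proves optimality.

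The only subtlety worth checking is that the argmin and argmax in the definition of the mechanism are nonempty, which follows because $C(T)$ is compact and $u$ is continuous, so existence of the punishment payoff $\utilmin$ and of the on-path maximizer is immediate. There is no genuine technical obstacle; the content of the proposition is the conceptual observation that, although the agent has no payoff-relevant private information, his ability to conceal technologies already reduces the principal's problem to selecting her favorite point in $\conv(C(T_0))$ subject to a single reservation-utility floor pinned down by the default technology.
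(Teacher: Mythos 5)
Your proposal is correct and follows essentially the same route as the paper: derive the reservation-utility bound $a_1(T_0)\geq \utilmin$ from the agent's ability to report only $\{\default\}$, conclude that the principal's payoff is capped by $\max\{v : (u,v)\in \conv(C(T_0)),\, u\geq \utilmin\}$, and observe that ``shoot the agent'' attains this cap while being incentive compatible. Your write-up is somewhat more explicit than the paper's (in particular, the verification that off-path punishments are maximized at $\{\default\}$ via monotonicity of $C(\cdot)$ in set inclusion, which the paper asserts without detail), but the argument is the same.
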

For the purpose of subsequent analysis, we are interested in understanding the agent's payoffs under the optimal mechanism in the complete information benchmark. 
To simplify the exposition, we impose a generic assumption that the optimal choice of the principal with any technology set $\type\in \types$ is unique. All our results generalize without this assumption and we delay the discussion in the extension.
\begin{assumption}
\label{assumption_generic}
The optimal choice of the principal 
$\principalOpt(T)=\argmax_{(u,v) \in C(\type)} v$ is unique given any type $\type$.
\end{assumption}
With this assumption, $\principalOpt_1(T)$ is the agent's payoff when the principal chooses her optimal choice within $T$, and we know
\begin{corollary}\label{cor:agent_payoff_complete}
    Under the optimal mechanism of the complete information benchmark, the agent's payoff is
    \begin{equation}
    \label{eq_completeinformation}
        \completeInfo(T) =   \max \{ \utilmin, \principalOpt_1(T)    \}.
    \end{equation}
\end{corollary}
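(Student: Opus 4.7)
The corollary is essentially a restatement of the on-path payoff under the ``shoot the agent'' mechanism from \cref{prop_completeinformation}, so my plan is short. First, I would invoke \cref{prop_completeinformation} to reduce the problem to computing $a_1(T)$ for the second branch of the shoot-the-agent rule, namely $a(T) \in \argmax_{(u,v)\in \conv(C(T)),\, u \geq \utilmin} v$. Let $\principalOpt(T)$ denote the unique unconstrained maximizer of $v$ on $\conv(C(T))$, whose existence and uniqueness are guaranteed by compactness of $C(T)$ together with \cref{assumption_generic}.

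Second, I would split on whether the unconstrained optimum satisfies the individual-rationality constraint. If $\principalOpt_1(T) \geq \utilmin$, then $\principalOpt(T)$ is itself feasible for the constrained problem, so the agent's on-path payoff is exactly $\principalOpt_1(T)$, which coincides with $\max\{\utilmin, \principalOpt_1(T)\}$. If instead $\principalOpt_1(T) < \utilmin$, the constraint binds and I need to show the agent's payoff equals \emph{exactly} $\utilmin$. Denoting a constrained optimum by $(u^c, v^c)$, I would argue by contradiction: if $u^c > \utilmin$, then for sufficiently small $\lambda > 0$ the convex combination $\lambda \principalOpt(T) + (1-\lambda)(u^c, v^c)$ still lies in $\conv(C(T))$ by convexity and still satisfies the IR constraint by continuity. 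Since $u^c > \utilmin > \principalOpt_1(T)$ forces $(u^c, v^c) \neq \principalOpt(T)$, uniqueness of the unconstrained maximizer (\cref{assumption_generic}) gives the \emph{strict} inequality $v^c < \principalOpt_2(T)$, so the convex combination strictly raises $v$, contradicting optimality of $(u^c, v^c)$. Hence $u^c = \utilmin$.

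Combining both cases yields $\completeInfo(T) = \max\{\utilmin, \principalOpt_1(T)\}$. The only substantive step is the convexity argument ruling out $u^c > \utilmin$ in the binding case; it is not really an obstacle, but it is the one place where both the convexity of $\conv(C(T))$ and \cref{assumption_generic} genuinely do work, the latter by upgrading weak to strict dominance of $\principalOpt(T)$ in $v$ over any other point in $\conv(C(T))$.
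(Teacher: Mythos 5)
Your proof is correct and takes essentially the same approach as the paper: the identical case split on whether $\principalOpt_1(T) \geq \utilmin$, with convexity of $\conv(C(T))$ together with \cref{assumption_generic} forcing the agent's payoff to equal exactly $\utilmin$ when the constraint binds. The only difference is presentational—the paper states that the principal's payoff is decreasing in the promised utility $z$ for $z \geq \utilmin \geq \principalOpt_1(T)$ (concavity past the peak), whereas you unpack that same fact as a convex-combination perturbation toward $\principalOpt(T)$.
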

We denote $\completeInfo(T)$ as the complete information curve.
In the absence of private information about the technologies $T$, private access to the technology gives the agent very minimal power: he is willing to reveal all he has and let the principal choose her preferred option as long as it is weakly better than the worst punishment under the default set of technologies.

\subsection{Optimal Mechanism}
After discussing the complete information benchmark and introducing the complete information curve, we are ready to discuss the optimal mechanism with private information.

\paragraph{Promised Utility Function}

The first step of the analysis is to point out that the design problem can be separated into two independent problems. Unless otherwise specified, we always impose \cref{assumption1}. Therefore, to simplify notation, we also denote the agent's type $\type$ as a real number in $[0,1]$, where a real number $\type\leq \type'$ implies the technology set $\type \subseteq \type'$.

For an (incentive-compatible) direct mechanism $a(\type)$,  $\promise(\type)=a_1(\type)$ is the agent's payoff when he reports $\type$. We call this $\promise(\type)$ function the promised utility function. Once the promised utility function is given, the rest of the design problem is non-strategic:
\begin{lemma}
\label{lemma_PromiseUtility}
      It is without loss of optimality to focus on direct mechanisms $a(\type)$ where 
\begin{align*}
     a(\type) = \argmax_{(u,v)}  &\quad v, \\
     \text{\rm s.t.}  \quad&\quad (u,v) \in C(\type), u=\promise(\type)
\end{align*}
\end{lemma}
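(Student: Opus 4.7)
The plan is to show that for any incentive-compatible direct mechanism $a(\type)$, one can construct another incentive-compatible direct mechanism $\tilde{a}(\type)$ of the stated form that weakly improves the principal's expected payoff. The key observation is that the IC constraint $a_1(\type) \geq a_1(\type')$ for all $\type' \subseteq \type$ involves only the agent's payoff coordinate, while the principal's objective $\expect[\dist]{a_2(\type)}$ is separable across types. Hence, once the promised utility function $\promise(\type) = a_1(\type)$ is fixed, the principal can independently optimize the second coordinate $a_2(\type)$ pointwise without affecting any incentive constraint.

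Concretely, I would begin with an arbitrary optimal IC direct mechanism $a$ and set $\promise(\type) = a_1(\type)$. Then I would construct $\tilde{a}(\type)$ by keeping $\tilde{a}_1(\type) = \promise(\type)$ and redefining
\begin{align*}
    \tilde{a}_2(\type) = \max \big\{ v : (\promise(\type), v) \in \conv(C(\type)) \big\}.
\end{align*}
This maximum is attained because $\conv(C(\type))$ is compact (since $C(\type)$ is compact-valued) and the feasible slice at $u = \promise(\type)$ is non-empty: the original pair $(a_1(\type), a_2(\type))$ belongs to $\conv(C(\type))$ and satisfies $a_1(\type) = \promise(\type)$ by construction.

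Next, I would verify that $\tilde{a}$ is both feasible and incentive compatible. Feasibility $\tilde{a}(\type) \in \conv(C(\type))$ holds by construction. Incentive compatibility $\tilde{a}_1(\type) \geq \tilde{a}_1(\type')$ for all $\type' \subseteq \type$ is inherited directly from $a$, since $\tilde{a}_1 \equiv a_1$. Finally, $\tilde{a}_2(\type) \geq a_2(\type)$ pointwise by the definition of the maximum, so the principal's expected payoff under $\tilde{a}$ weakly dominates that under $a$. This establishes that restricting attention to direct mechanisms of the stated form is without loss of optimality.

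I anticipate no serious obstacle in carrying out this argument; it is essentially a separation result exploiting the fact that the agent's incentive constraints are insensitive to the principal's payoff. The only mild technical point is verifying attainment of the maximum, which follows from compactness of $\conv(C(\type))$ together with non-emptiness of the relevant slice. Conceptually, the lemma is precisely what permits the decomposition of the mechanism design problem into (i) choosing a monotone promised-utility function $\promise(\type)$, and (ii) solving, for each $\type$ separately, a static optimization problem over $\conv(C(\type))$ that depends only on that single type.
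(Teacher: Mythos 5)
Your proposal is correct and follows essentially the same argument as the paper: fix the promised utility $\promise(\type)=a_1(\type)$, replace $a_2(\type)$ by the maximal $v$ with $(\promise(\type),v)\in\conv(C(\type))$, and note that incentive compatibility is unaffected while the principal's payoff weakly improves pointwise. Your explicit verification that the maximum is attained (compactness of $\conv(C(\type))$ plus non-emptiness of the slice) is a small point the paper leaves implicit, but it is the same proof.
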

In some applications, e.g., business strategies with moral hazard, the analysis of the exact boundary of $C(\type)$ can be challenging.
Nevertheless, it is conceptually simple, and we will define
\begin{gather*}
    \optv(\type,u) = \max_{ (u',v') \in C(\type), u'=u}   v',
\end{gather*}
and proceed as if we fully understand $V(T,u)$. This decomposition helps us to smooth out the complexity of the problem and identify general properties independent of the detailed structure of the feasible choice set $C(\type)$. 

Not all promised utility functions can be achieved by some choice $(u,v)\in C(T)$ as the principal is limited by the choice set. The second step is to characterize which promised utility functions are (technologically) feasible.
\begin{lemma}
\label{lemma_feasible}
    $\promise(\type)$ is feasible if and only if $\promise(\type) \in [\lowerfeasible(\type),\upperfeasible(\type)  ]$, where
    \begin{align*}
        \lowerfeasible(\type) = \min_{(u,v)\in C(\type)} u, \quad \upperfeasible(\type) = \max_{(u,v)\in C(\type)} u.
    \end{align*}
    $\lowerfeasible(\type)$ is decreasing in $\type$ and $\upperfeasible(\type)$ is increasing in $\type$.
\end{lemma}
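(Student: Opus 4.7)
The plan is to prove the two parts separately: first the interval characterization of feasibility, then the monotonicity of $\lowerfeasible$ and $\upperfeasible$. Both parts are essentially consequences of elementary convex analysis applied to the projection of $\conv(C(\type))$ on the $u$-coordinate, combined with the already-stated monotonicity $C(\type) \subseteq C(\type')$ when $\type \subseteq \type'$.

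For the characterization part, I would start from \cref{lemma_PromiseUtility}: once $\promise(\type)$ is prescribed, feasibility at type $\type$ is equivalent to the existence of some $v$ with $(\promise(\type),v) \in \conv(C(\type))$. In other words, feasibility at $\type$ is exactly the statement that $\promise(\type)$ lies in the projection of $\conv(C(\type))$ onto its first coordinate. Since $\conv(C(\type))$ is convex and the projection map is linear, this projection is a convex subset of $\reals$, hence an interval. Because $C(\type)$ is compact by assumption, $\conv(C(\type))$ is compact in the relevant sense for our purposes, and the interval has a minimum and a maximum. Next, I would invoke the standard fact that the extrema of a linear functional over $\conv(C(\type))$ are attained at points of $C(\type)$, so $\min u$ and $\max u$ over $\conv(C(\type))$ coincide with $\lowerfeasible(\type)$ and $\upperfeasible(\type)$ respectively. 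This gives the interval $[\lowerfeasible(\type), \upperfeasible(\type)]$ as exactly the set of feasible values of $\promise(\type)$.

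For the monotonicity part, I would use the inclusion $C(\type) \subseteq C(\type')$ whenever $\type \subseteq \type'$, which was stated as a primitive property of $C$ in the model section. Minimizing the first coordinate over a larger set weakly decreases the minimum, which immediately gives $\lowerfeasible(\type') \leq \lowerfeasible(\type)$; symmetrically, maximizing over a larger set weakly increases the maximum, giving $\upperfeasible(\type') \geq \upperfeasible(\type)$. Under the identification of types with real numbers in $[0,1]$ so that $\type \leq \type'$ means $\type \subseteq \type'$, these translate into $\lowerfeasible$ being decreasing and $\upperfeasible$ being increasing in $\type$.

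The main (minor) subtlety is the step passing from $C(\type)$ to $\conv(C(\type))$: one must justify that the interval endpoints, originally defined as extrema over the raw choice set $C(\type)$, really agree with the extrema over its convex hull, so that the characterization genuinely matches $[\lowerfeasible(\type), \upperfeasible(\type)]$ rather than some superset. This follows from linearity of the coordinate projection, but is worth stating explicitly so that the reduction via \cref{lemma_PromiseUtility} lines up cleanly with the definitions of $\lowerfeasible$ and $\upperfeasible$ given directly over $C(\type)$. Beyond that, there is nothing delicate; the proof is essentially a two-line convex-analytic observation followed by a direct application of the set-inclusion monotonicity of $C(\cdot)$.
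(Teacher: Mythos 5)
Your proof is correct and follows essentially the same route as the paper's: feasibility of $\promise(\type)$ is equivalent to membership in the projection of $\conv(C(\type))$ onto the $u$-coordinate, which is a compact convex interval whose endpoints are $\lowerfeasible(\type)$ and $\upperfeasible(\type)$. If anything, you are slightly more complete than the paper's own two-line proof, which neither spells out that the extrema of $u$ over $\conv(C(\type))$ coincide with those over $C(\type)$ (the linear-functional fact you flag) nor proves the monotonicity claim at all, both of which follow exactly as you argue from linearity of the projection and from the primitive inclusion $C(\type)\subseteq C(\type')$ whenever $\type\subseteq\type'$.
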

The feasibility constraint in optimal control problems is often quite messy. It introduces Lagrange (costate) multipliers that may or may not bind, obscuring the economic intuition of the solution and making it difficult to derive general predictions. However, as we will show in \cref{lem:range_of_promised_util}, under our set-inclusion assumption (\cref{assumption1}), the feasibility constraints do not bind in the optimal mechanism.

The third step is to characterize which promised utility functions $\promise(\type)$ are incentive compatible. Clearly, if $\type\geq \type'$, then the type $\type$ agent always has the option to report his subset $\type'$ and gets $\promise(\type')$. Thus, $\promise(\type)$ must be weakly increasing in $\type$. In fact, this is the only incentive constraint. 
\begin{lemma}
    \label{lemma_Monotone} A promised utility function $\promise(\type)$ is incentive compatible if and only if
\begin{enumerate}[ ]
  \item (i) $\promise(\type)$ is weakly increasing;
  \item (ii) $\promise(\default) \geq \utilmin$.
\end{enumerate}
\end{lemma}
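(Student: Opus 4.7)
The plan is to handle the two directions separately, exploiting the chain structure from \cref{assumption1}. For necessity, take any IC direct mechanism with promised utility $\promise$. Whenever $T \supseteq T'$ the type-$T$ agent is free to report $T'$, and under a direct mechanism this deviation yields on-path payoff $\promise(T')$; IC therefore forces $\promise(T) \geq \promise(T')$, which is (i). For (ii), the same agent can report $\{\default\}$ instead, and the principal's response must come from $\conv(C(\{\default\}))$, whose first coordinate is at least $\min_{(u,v)\in C(\{\default\})} u = \utilmin$ by \eqref{eq_outsideoption}. Thus IC requires $\promise(T) \geq \utilmin$ for every $T$ in the support, and by (i) this bound is binding at the smallest type, delivering (ii).

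For sufficiency, I would construct an explicit mechanism from any feasible $\promise$ satisfying (i) and (ii). On the path, for each $T$ in the support choose $a(T) \in \conv(C(T))$ with $a_1(T) = \promise(T)$; conditions (i) and (ii), together with the monotonicity of $\lowerfeasible$ in \cref{lemma_feasible}, ensure
$$\lowerfeasible(T) \leq \lowerfeasible(\{\default\}) = \utilmin \leq \promise(\{\default\}) \leq \promise(T),$$
so $\promise(T)$ does not fall below $\lowerfeasible(T)$ (the upper-side feasibility $\promise(T) \leq \upperfeasible(T)$ is the separate envelope constraint of \cref{lemma_feasible}). Off the path — including at $\{\default\}$ when it is not in the support — use the shoot-the-agent rule $a(T') \in \argmin_{(u,v)\in C(T')} u$ from \cref{prop_completeinformation}. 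Verifying IC at a true type $T$: deviating to $T' \subseteq T$ in the support yields $\promise(T') \leq \promise(T)$ by (i), while any off-path deviation yields $\lowerfeasible(T') \leq \lowerfeasible(\{\default\}) = \utilmin \leq \promise(T)$, so truth-telling is optimal.

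The only real piece of bookkeeping is the treatment of off-path reports. It is clean in this setting because the monotonicity of $\lowerfeasible$ means the strongest off-path punishment the principal can credibly deliver is precisely $\utilmin$ — obtained at the report $\{\default\}$ — and (ii) is exactly the condition that absorbs this constraint uniformly across all types.
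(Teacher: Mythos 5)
Your proof is correct and follows essentially the same route as the paper's: necessity of (i) via downward misreporting along the nested chain, necessity of (ii) via feasibility of the principal's action at the report $\{\default\}$, and sufficiency from the fact that monotonicity makes every downward misreport weakly unprofitable. Your only addition is the explicit shoot-the-agent treatment of reports outside the support of $\dist$, which the paper's one-line sufficiency argument leaves implicit; this is a slightly more careful piece of bookkeeping, not a different approach.
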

\cref{lemma_Monotone}  highlights another difference between our screening model of feasibility and the Myersonian screening model of payoff-relevant information. In our model, high types only get a weakly higher payoff relative to low types, instead of obtaining a strictly higher payoff which depends on the mechanism's allocation rule. In fact, when high types misreport as a low type, they receive exactly the same payoff as the low type; whereas in the Myersonian screening model, high types obtain a strictly higher payoff even with the same allocation, which creates information rent.  This difference comes from the fact that the principal does not need to make any inferences once the set of technologies is reported: she can perfectly evaluate the consequences of all her choices.

With the help of \cref{lemma_PromiseUtility,lemma_feasible,lemma_Monotone}, 
we can rewrite the principal's design problem as 
\begin{align}
    \max_{\promise(\cdot)} &\quad \int  ~ \optv(\type,\promise(\type))  \dd F(\type) \label{eq_designproblem} \tag{OPT}\\
    \text{s.t.} &\quad \promise(\type) \text{ is weakly increasing}, \nonumber\\
        & \quad \promise(\type) \in [\lowerfeasible(\type),\upperfeasible(\type)  ], \nonumber\\
    &\quad \promise(\default) \geq \utilmin.  \nonumber
\end{align}

\paragraph{Optimal Promised Utility Functions}

Perhaps surprisingly, the complete-information curve $\completeInfo(T)$, defined in \Cref{eq_completeinformation}, is crucial to the characterization of the optimal promised utility function. To see this, we start with two simple observations. First, by construction, $\completeInfo(T)$ is always feasible. Second, if $\completeInfo(T)$ is increasing in $T$, then it is the optimal promised utility function, and the principal can attain the complete-information payoff. The central question, therefore, is how to characterize the optimal promised utility function when the complete-information curve is not monotone. To this end,
we  define the upper monotone closure $\uppercurve(\type)$ as the minimum monotone function such that 
$\uppercurve(\type) \geq \completeInfo(\type)$ for all $\type$. 
That is, 
\begin{align}
\uppercurve(\type) = \max_{\type'\leq \type} \completeInfo(\type'), \quad \forall \type \in \types.
\end{align}

See the red dashed line in \Cref{fig:full_info_bound} for an illustration. Similarly, the lower monotone closure $\lowercurve(\type)$ is the maximum monotone function such that 
$\lowercurve(\type) \leq \completeInfo(\type)$ for all $\type$, and
\begin{align}
\lowercurve(\type) = \min_{\type'\geq \type} \completeInfo(\type'), \quad \forall \type \in \types.
\end{align}

A promised utility function $U(T)$ is in the monotone envelope (of the complete information curve) if
\begin{align}
   \promise(\type) \in [\lowercurve(\type),\uppercurve(\type)], \quad \forall \type \in \types.
\end{align}
It turns out that it is without loss of generality to focus on a monotone promised utility function in the monotone envelopes. The intuition is that we can always project the promised utility function into the monotone envelope while keeping the monotonicity, as illustrated in \Cref{fig:lemma4_projection}, to improve the principal's payoff. Moreover, once we focus on the monotone promised utility function within the monotone envelope, the feasibility constraint can be neglected.

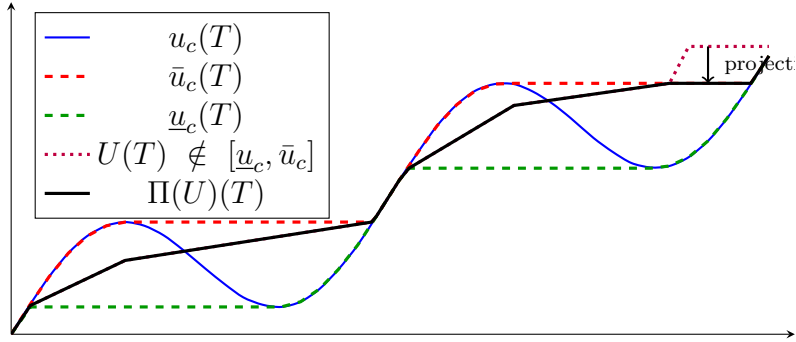
\begin{figure}
\centering
\begin{tikzpicture}
    \begin{axis}[
        width=12cm, height=6cm,
        axis lines=middle,
        xmin=0, xmax=13,
        ymin=0, ymax=4.5,
        legend pos=north west,
        ticks=none
    ]
    \addplot [blue, thick, smooth] coordinates {
        (0.00,0.00) (0.32,0.41) (0.63,0.78) (0.95,1.10) (1.26,1.33) (1.58,1.47) 
        (1.89,1.52) (2.21,1.47) (2.53,1.34) (2.84,1.15) (3.22,0.89) (3.54,0.68) 
        (3.85,0.50) (4.17,0.40) (4.48,0.37) (4.80,0.44) (5.11,0.61) (5.43,0.88) 
        (5.75,1.21) (6.06,1.60) (6.44,2.09) (6.76,2.48) (7.07,2.83) (7.39,3.11) 
        (7.70,3.30) (8.02,3.39) (8.34,3.39) (8.65,3.29) (8.97,3.13) (9.28,2.93) 
        (9.66,2.66) (9.98,2.47) (10.29,2.32) (10.61,2.26) (10.92,2.28) (11.24,2.40) 
        (11.56,2.62) (11.87,2.92) (12.19,3.29) (12.57,3.77)
    };
    \addlegendentry{$\completeInfo(\type)$}
    
    \addplot [red, dashed, very thick] coordinates {
        (0.00,0.00) (0.32,0.41) (0.63,0.78) (0.95,1.10) (1.26,1.33) (1.58,1.47) 
        (1.89,1.52) (2.21,1.52) (2.53,1.52) (2.84,1.52) (3.22,1.52) (3.54,1.52) 
        (3.85,1.52) (4.17,1.52) (4.48,1.52) (4.80,1.52) (5.11,1.52) (5.43,1.52) 
        (5.98,1.52) (6.06,1.60) (6.44,2.09) (6.76,2.48) (7.07,2.83) (7.39,3.11) 
        (7.70,3.30) (8.02,3.39) (8.34,3.40) (8.65,3.40) (8.97,3.40) (9.28,3.40) 
        (9.66,3.40) (9.98,3.40) (10.29,3.40) (10.61,3.40) (10.92,3.40) (11.24,3.40) 
        (11.56,3.40) (11.87,3.40) (12.27,3.40) (12.57,3.77)
    };
    \addlegendentry{$\uppercurve(\type)$}

    \addplot [green!60!black, dashed, very thick] coordinates {
        (0.00,0.00) (0.32,0.37) (0.63,0.37) (0.95,0.37) (1.26,0.37) (1.58,0.37) 
        (1.89,0.37) (2.21,0.37) (2.53,0.37) (2.84,0.37) (3.22,0.37) (3.54,0.37) 
        (3.85,0.37) (4.17,0.37) (4.48,0.37) (4.80,0.44) (5.11,0.61) (5.43,0.88) 
        (5.75,1.21) (6.06,1.60) (6.44,2.09) (6.60,2.25) (7.07,2.25) (7.39,2.25) 
        (7.70,2.25) (8.02,2.25) (8.34,2.25) (8.65,2.25) (8.97,2.25) (9.28,2.25) 
        (9.66,2.25) (9.98,2.25) (10.29,2.25) (10.61,2.25) (10.92,2.28) (11.24,2.40) 
        (11.56,2.62) (11.87,2.92) (12.19,3.29) (12.57,3.77)
    };
    \addlegendentry{$\lowercurve(\type)$}

    \addplot [purple, dotted, very thick] coordinates {
        (0.00,0.00)
        (0.32,0.39)
        (1.89,1.00)
        (5.98,1.52)
        (6.06,1.60)
        (6.44,2.09)
        (6.60,2.25)
        (8.34,3.10)
        (10.92,3.40)
        (11.24,3.90)
        (12.57,3.90)
    };
    \addlegendentry{$U(\type)\ \notin\ [\lowercurve,\uppercurve]$}

    \addplot [purple, very thick] coordinates {
        (0.00,0.00)
        (0.32,0.39)
        (1.89,1.00)
        (5.98,1.52)
        (6.06,1.60)
        (6.44,2.09)
        (6.60,2.25)
        (8.34,3.10)
        (10.92,3.40)
        (11.24,3.40)
        (12.27,3.40)
        (12.57,3.77)
    };
    \addlegendentry{$\Pi(U)(\type)$}

    \draw[->, thick] (axis cs:11.56,3.90) -- (axis cs:11.56,3.40);
    \node[anchor=west] at (axis cs:11.65,3.65)
      {\scriptsize projection lowers promise $\Rightarrow$ improves principal payoff};

    \end{axis}
\end{tikzpicture}
\caption{Illustration for Lemma 4: a monotone promised utility $U(\type)$ that lies outside the monotone envelope can be projected into the envelope by clamping to $\uppercurve(\type)$ where it violates feasibility/optimality; this projection weakly increases the principal's payoff.}
\label{fig:lemma4_projection}
\end{figure}

\begin{lemma}\label{lem:range_of_promised_util}
Any promised utility function in the monotone envelope  is feasible. Moreover, there exists a monotone promised utility function in the monotone envelope, which solves the design problem \eqref{eq_designproblem}.
\end{lemma}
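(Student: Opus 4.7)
My approach splits into two parts: a containment argument for feasibility, and a pointwise projection argument followed by compactness for existence.

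For feasibility, I would verify that $[\lowercurve(T),\uppercurve(T)] \subseteq [\lowerfeasible(T),\upperfeasible(T)]$ for every $T$. By definition $\completeInfo(T') = \max\{\utilmin,\principalOpt_1(T')\}$ lies in $[\utilmin,\upperfeasible(T')]$ for every $T'$, since $\principalOpt_1(T') \in [\lowerfeasible(T'), \upperfeasible(T')]$. Since \cref{lemma_feasible} gives that $\upperfeasible$ is increasing and $\lowerfeasible$ is decreasing in $T$, taking the appropriate extrema yields $\uppercurve(T) \leq \max_{T' \leq T}\upperfeasible(T') = \upperfeasible(T)$, and $\lowercurve(T) \geq \utilmin = \lowerfeasible(\{\default\}) \geq \lowerfeasible(T)$.

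For optimality, given any monotone feasible $U$, I would consider the pointwise clamp
\[
\hat U(T) \;=\; \max\Bigl\{\lowercurve(T),\,\min\bigl\{\uppercurve(T),\,U(T)\bigr\}\Bigr\}.
\]
Then $\hat U$ is monotone (min and max of monotone functions preserve monotonicity) and lies within the envelope by construction, and it satisfies IR since $\hat U(\{\default\}) \geq \lowercurve(\{\default\}) \geq \utilmin$, using that $\completeInfo \geq \utilmin$ pointwise. The core task is to verify $\optv(T,\hat U(T)) \geq \optv(T,U(T))$ pointwise. Here I would use that $\optv(T,\cdot)$ is concave---it traces the upper boundary of the convex set $\conv(C(T))$---with unique peak at $u = \principalOpt_1(T)$. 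If $U(T) > \uppercurve(T)$, then $\hat U(T) = \uppercurve(T) \geq \completeInfo(T) \geq \principalOpt_1(T)$, so both $U(T)$ and $\hat U(T)$ lie in the decreasing branch of $\optv(T,\cdot)$ and the clamp weakly increases payoff. The lower-clamp case $U(T) < \lowercurve(T)$ is the subtle one: I would argue that IR plus monotonicity of $U$ force $U(T) \geq \utilmin$, while $\lowercurve(T) \geq \utilmin$ always holds, so $U(T) < \lowercurve(T)$ requires $\lowercurve(T) > \utilmin$; this in turn forces $\completeInfo(T) = \principalOpt_1(T) > \utilmin$ and hence $\lowercurve(T) \leq \completeInfo(T) = \principalOpt_1(T)$, placing both $U(T)$ and $\hat U(T)$ in the increasing branch of $\optv(T,\cdot)$.

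Finally, for existence, the set of monotone functions with values in the uniformly bounded envelope is compact under pointwise convergence by Helly's selection theorem, and concavity of $\optv(T,\cdot)$ in its second argument gives upper semicontinuity of the integrand; combined with Fatou's lemma, a maximizer exists, and by the projection step it can be chosen within the envelope. The main obstacle I anticipate is the lower-clamp subcase: the argument relies crucially on ruling out the configuration $\lowercurve(T) > \principalOpt_1(T)$, which is averted precisely because IR plus monotonicity of any candidate $U$ prevents the binding lower clamp from ever landing above the peak.
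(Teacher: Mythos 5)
Your proof takes the same core route as the paper's: clamp the promised utility to the monotone envelope pointwise, and use concavity/single-peakedness of $\optv(\type,\cdot)$ to show the clamp weakly raises the principal's payoff type by type. The paper does this one side at a time (clamping down by $\uppercurve$, then asserting the lower direction follows ``in the same manner''), whereas you do the two-sided clamp at once; that difference is cosmetic. What is not cosmetic is that your write-up closes three gaps the paper leaves open. First, you prove the containment $[\lowercurve(\type),\uppercurve(\type)]\subseteq[\lowerfeasible(\type),\upperfeasible(\type)]$ explicitly, which is the feasibility half of the lemma and is never argued in the paper's proof. Second, you correctly recognize that the lower clamp is \emph{not} symmetric to the upper one: when $\principalOpt_1(\type)<\utilmin$, the peak of $\optv(\type,\cdot)$ lies strictly below $\completeInfo(\type)$, so ``raising the promise toward $\lowercurve$'' is not automatically payoff-improving; your use of IR plus monotonicity to show that a binding lower clamp can never land above the peak is exactly the missing step hidden behind the paper's ``same manner'' remark. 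Third, you supply an existence argument (Helly selection, upper semicontinuity of $\optv(\type,\cdot)$, reverse Fatou) that the paper omits entirely---its proof only shows that an optimum, \emph{if it exists}, can be projected into the envelope. One small precision: upper semicontinuity of $\optv(\type,\cdot)$ at the endpoints of the feasible interval follows from closedness (compactness) of $\conv(C(\type))$, not from concavity alone, since a concave function can fail upper semicontinuity at the boundary of its domain; with that attribution fixed, your argument is complete.
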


\cref{lem:range_of_promised_util} greatly simplifies the design problem: we just need to search over all monotone promised utility functions within the monotone envelope without worrying about any other constraints. It also implies that the design problem \eqref{eq_designproblem} admits a solution.

To prepare for a sharper characterization of the optimal promised utility, we quantify the non-monotonicity of the complete-information curve. Let 
$K$ denote the number of maximal intervals on which the complete-information curve is strictly decreasing. 
Formally,\footnote{We allow the decreasing interval $I$ to be degenerate, which corresponds to discontinuous downward jumps of $u_c(\cdot)$.}
\[
K \equiv 
\#\Big\{ I \subseteq [0,1] :\;
I \text{ is a maximal interval such that } 
u_c(\cdot) \text{ is strictly decreasing in } I 
\Big\}.
\]

The principal's optimal policy features a bang--bang structure across intervals: on each interval, she either implements the complete-information allocation or keeps the agent's promised utility constant throughout. Moreover, the number of constant-promise intervals is bounded by $K$.

\begin{figure}
\centering
\begin{tikzpicture}
    \begin{axis}[
        width=12cm, height=6cm,
        axis lines=middle,
        xmin=0, xmax=13,
        ymin=0, ymax=4.5,
        legend pos=north west,
        ticks=none
    ]

    \addplot [blue, thick, smooth] coordinates {
        (0.00,0.00)
        (0.50,0.03) (1.00,0.10) (1.50,0.22) (2.00,0.40) (2.50,0.62)
        (3.00,0.90) (3.50,1.20) (4.00,1.50) (4.50,1.9) (5.00,2.30)
        (5.50,2.7) (6.00,2.90) 
        (6.50,2.90) (7.0,2.77) (7.30,2.6) (7.50,2.40) (7.70,2.1) (8.0,1.8) (8.40,1.62) (8.60,1.55) (9.00,1.50)
        (9.00,1.50) (9.10,1.51) (9.20,1.54) (9.30,1.59) (9.40,1.66) (9.50,1.75) (9.60,1.86)
        (9.70,1.99) (9.80,2.14) (10.00,2.5) (10.50,3.30) 
        (11.00,3.70) (11.50,3.92) (12.00,4.08) (12.50,4.20) (13.00,4.3) 
    };
    \addlegendentry{$\completeInfo(\type)$}

  
    \addplot [purple, dotted, very thick, smooth] coordinates {
        (0.00,0.00) (0.50,0.03) (1.00,0.10) (1.50,0.22) (2.00,0.40) (2.50,0.62) (3.00,0.90) (3.50,1.20) (4.00,1.50)
        (4.00,1.50) (5.00,1.75) (6.00,1.95) (7.00,2.08) (7.60,2.15) (8.00,2.18) (9.00,2.3) (9.20,2.35) (9.40,2.45) (9.55,2.50) 
        (9.80,2.7) (10.00,2.8) (10.25,2.9)
        (10.25,2.9) (10.50,3.30) (11.00,3.70) (11.50,3.92) (12.00,4.08) (12.50,4.20) (13.00,4.3) 
    };
    \addlegendentry{$U(\type)$}

    \addplot [purple, very thick] coordinates {
        (0.00,0.00) (0.50,0.03) (1.00,0.10) (1.50,0.22) (2.00,0.40) (2.50,0.62) (3.00,0.90) (3.50,1.20) (4.00,1.50) (4.50,1.9) (4.80,2.15)
        (4.80,2.15) (8.00,2.15) (9.80,2.15)
        (9.80,2.15) (10.00,2.5) (10.50,3.30) 
        (11.00,3.70) (11.50,3.92) (12.00,4.08) (12.50,4.20) (13.00,4.3) 
    };
    \addlegendentry{$\hat U(\type)$}
    
    \draw[->, thick] (axis cs:5,1.8) -- (axis cs:5,2.1);
    \end{axis}
\end{tikzpicture}
\caption{Illustration of \cref{thm:optimal_promise}.}
\label{fig:thm1}
\end{figure}

\begin{theorem}[Optimal Mechanisms]
\label{thm:optimal_promise}
There exists a partition of the type space into intervals $\{I_j\}_{j \in J_1 \cup J_2}$ such that $|J_1|-1\leq |J_2|\leq K$ and
\begin{enumerate}
\item If $j\in J_1$, $\promise(\type) = \completeInfo(\type)$ for any type $\type\in I_j$;
\item If $j\in J_2$, $\promise(\type)$ is a constant function in $I_j$.
\end{enumerate} 
\end{theorem}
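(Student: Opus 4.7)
The plan is to exploit the decomposition in \cref{lem:range_of_promised_util}, which reduces the problem to choosing a monotone $\promise$ within the envelope $[\lowercurve(\type), \uppercurve(\type)]$, and to combine this with the pointwise structure of $\optv(\type, \cdot)$. The key observation I would start from is that, for each fixed $\type$, the map $u \mapsto \optv(\type, u)$ is concave (it is the upper frontier of the convex set $\conv(C(\type))$) and is maximized on the envelope at $u = \completeInfo(\type)$. Hence the principal's pointwise unconstrained optimum is to track the complete-information curve, and any deviation is costly by concavity; monotonicity is the only force that makes this tracking infeasible, and only where $\completeInfo$ itself fails to be monotone.

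The central structural lemma I would establish is: at any optimum $\promise^*$, on every open interval on which $\promise^*$ is strictly increasing we must have $\promise^* = \completeInfo$. The proof is by a local perturbation at any putative $\type_0$ with $\promise^*(\type_0) \neq \completeInfo(\type_0)$: strict monotonicity creates slack that lets me shift $\promise^*$ on a small closed subinterval around $\type_0$ toward $\completeInfo(\type_0)$ (up or down depending on the sign of the gap) while preserving monotonicity, and concavity of $\optv(\type, \cdot)$ then gives strict improvement, contradicting optimality. With this lemma in hand, I partition the type space into the maximal intervals on which $\promise^*$ is constant (these form $J_2$) and their complement, which is a union of maximal intervals on which $\promise^*$ is strictly monotone; by the lemma, $\promise^* = \completeInfo$ on the latter, and I label them $J_1$. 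Intervals on which $\promise^*$ is both constant and equal to $\completeInfo$ can be assigned to either family without affecting the counting bounds.

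To obtain $|J_2| \leq K$, I would argue in two steps. First, each $J_2$ interval must intersect some maximal strictly decreasing interval $D_k$ of $\completeInfo$: if $\promise^* \equiv c$ on a maximal constant interval $I = [a,b]$ and $\completeInfo$ is non-decreasing on $I$, then the identification $\promise^* = \completeInfo$ on the adjacent $J_1$ pieces, together with continuity of $\completeInfo$ (inherited from \cref{assumption_generic} and monotonicity of $T \mapsto \conv(C(T))$), forces $\completeInfo(a) \leq c \leq \completeInfo(b)$; replacing $\promise^*$ by $\completeInfo$ on $I$ preserves global monotonicity and weakly dominates pointwise by concavity, so $I$ is absorbed into $J_1$. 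Second, two distinct $J_2$ intervals cannot intersect the same $D_k$: any intervening $J_1$ interval would sit inside $D_k$, but on it $\promise^* = \completeInfo$ is non-decreasing, contradicting $D_k$ being a strictly decreasing stretch of $\completeInfo$. The bound $|J_1| - 1 \leq |J_2|$ then follows immediately because $J_1$ and $J_2$ intervals alternate along the line.

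The main obstacle I anticipate is making the perturbation argument in the structural lemma fully rigorous, since a monotone $\promise^*$ need not be continuous and may carry jumps. The perturbation must therefore be localized on a closed subinterval whose endpoints sit strictly inside the local range of $\promise^*$, so that a small shift neither breaks monotonicity nor leaves the concavity region of $\optv(\type, \cdot)$ around $\completeInfo(\type)$; choosing the shift of second-order smaller magnitude than the local strict increase of $\promise^*$ accomplishes this. A secondary concern is the continuity of $\completeInfo$ used in the counting step; if continuity fails at isolated points, replacing values with one-sided limits and handling a zero-measure set of boundary points does not disturb the argument.
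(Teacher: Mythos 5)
The central gap is in your partition step. Your structural lemma constrains $\promise^*$ only on open intervals where it is \emph{strictly} increasing, and you then assert that the type space splits into maximal constant intervals (your $J_2$) and a complement that is a union of maximal strictly increasing intervals (your $J_1$). A weakly increasing function admits no such decomposition in general: a Cantor-function-like $\promise^*$ is constant on a dense open union of countably many intervals, is strictly increasing on no interval at all, and the residual Cantor-type set is not a union of intervals. On that singular part your lemma is silent, and your $J_2$ is countably infinite before any counting begins, so no finite partition of the form required by \cref{thm:optimal_promise} emerges. Since the theorem is an existence statement about \emph{some} optimal mechanism, the paper sidesteps this entirely with a constructive projection: it partitions the domain into maximal intervals where $\promise^* = \completeInfo$, $\promise^* > \completeInfo$, or $\promise^* < \completeInfo$, and on an interval $[\underline\type,\bar\type]$ of the second kind replaces $\promise^*$ by the running maximum $\max\bigl\{\promise^*(\underline\type),\ \max_{\underline\type\le s\le \type}\completeInfo(s)\bigr\}$ (symmetrically, a running minimum from the right in the third case). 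This substitute is monotone, sandwiched between $\completeInfo$ and $\promise^*$, hence pointwise weakly better by single-peakedness of $\optv(\type,\cdot)$, and is by construction piecewise either constant or equal to $\completeInfo$. That replacement \emph{manufactures} an optimum with the bang--bang structure rather than classifying an arbitrary one; your local-classification approach needs this (or an equivalent) idea to close.

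There is a second, more localized gap in your counting. Your argument that each strictly decreasing interval $D_k$ meets at most one constant interval presumes an intervening $J_1$ piece between any two constant intervals. But two maximal constant intervals can be adjacent, separated only by an upward jump of $\promise^*$ at some $\tau\in D_k$, with no $J_1$ interval in between; your argument says nothing about this configuration. The paper handles exactly this case: at the jump, with left value $\bar u_L$ and right value $\bar u_R>\bar u_L$, at least one of $\bar u_L<\completeInfo(\tau)$ or $\bar u_R>\completeInfo(\tau)$ must hold, and because $\completeInfo$ is strictly decreasing on $D_k$ the relevant inequality persists on one side of $\tau$, so a small monotonicity-preserving perturbation strictly improves the objective under \cref{assumption_generic}, a contradiction. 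Relatedly, your claim that continuity of $\completeInfo$ follows from \cref{assumption_generic} and monotonicity of $C(\cdot)$ is false---the paper's own applications feature discontinuous complete-information curves (the CEO example jumps at $3/22$ and $1/4$)---so the counting step cannot lean on it; your fallback via one-sided limits is the right instinct, but it must be made to do real work rather than handle an exceptional case. Finally, note that all of your strict-improvement perturbations implicitly require the perturbed region to have positive $F$-measure, a caveat the paper's Step 2 shares but its Step 1 avoids by using only weak improvement.
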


The intuition behind \cref{thm:optimal_promise} is transparent. For each type, the principal's payoff $\optv(\type,\cdot)$ is concave in the promised utility and is maximized exactly at the complete-information level $\completeInfo(\type)$, so absent incentive constraints she would set $\promise(\type)=\completeInfo(\type)$ type by type. The only binding constraint is that $\promise$ be weakly increasing (\cref{lemma_Monotone}). Wherever $\completeInfo$ is itself increasing, this constraint does not bite and the principal can deliver the complete-information allocation. The constraint bites only where $\completeInfo$ slopes downward: there, a monotone schedule cannot follow $\completeInfo$ down, so the principal must hold $\promise$ above $\completeInfo$ on a region to its left and below $\completeInfo$ on a region to its right. Because moving $\promise$ toward $\completeInfo$ always helps and moving it away always hurts (by the concavity of $\optv(\type,\cdot)$), the cheapest way to restore monotonicity is to flatten $\promise$ at a single level across the offending region rather than distort it elsewhere---which is exactly why the only departures from the benchmark are flat segments, and why each such segment sits on a downward-sloping piece of $\completeInfo$. This also explains the count: each strictly decreasing segment of $\completeInfo$ forces at most one flat region, giving the bound $|J_2|\le K$.

\cref{thm:optimal_promise} provides a rationale for two seemingly abnormal organizational behaviors: why institutions may deliberately limit incentives for marginally better proposals, and why they may fail to fully exploit newly disclosed feasible options. More strikingly, it shows that these two behaviors cannot appear in isolation in an optimal mechanism but instead may arise jointly. According to \cref{thm:optimal_promise}, either the principal fully exploits added feasibility—so richer feasible option sets are used in the best possible way for the principal—and continuous, strict incentives are provided, or she must both compress rewards and deliberately refrain from myopic exploitation to sustain truthful disclosure. One distortion cannot arise without the other.

This joint emergence of reward compression and deliberate constraints from the efficient exploitation of feasible options has clear organizational implications. In some capability ranges, better proposals lead to both better internal decisions and better treatment of the proposer: additional experiments, richer action plans, or more sophisticated business strategies are actively used to improve organizational outcomes, and the agent is rewarded accordingly. In other ranges, institutions deliberately maintain stable treatment even as proposals improve. Committees may follow standard approval rules, boards may adhere to fixed compensation packages, or regulators may apply uniform authorization pathways, despite being presented with richer feasible options. In these regions, additional disclosed capabilities expand the principal's internal choice set but are not fully exploited in the short run, because committing to stable treatment is necessary to sustain transparent disclosure of what is feasible.

\paragraph{Computational simplicity.}
\cref{thm:optimal_promise} reduces the entire infinite-dimensional design problem to choosing at most $K$ scalar cutoffs, namely the points at which the optimal promised utility departs from the complete-information curve. Thus, the optimal cutoffs can be located by a one-dimensional search over the $K$ decreasing segments of $\completeInfo$. In the discrete case, this can be implemented efficiently via dynamic programming. The characterization is thus not merely qualitative but delivers an easily computable description of the optimal mechanism.

\paragraph{Relation to ironing.}
The optimization \eqref{eq_designproblem} is mathematically reminiscent of ironing in mechanism design, and the interval partition in \cref{thm:optimal_promise} can be viewed as ironing applied to our setting. The difference is that \eqref{eq_designproblem} carries the additional feasibility constraints of \cref{lemma_feasible}, which \cref{lem:range_of_promised_util} shows to be non-binding. Despite this mathematical resemblance, the economics behind the optimization differ in two essential ways. First, the objective in \eqref{eq_designproblem} is the principal's \emph{actual} payoff $\optv(\type,\promise(\type))$, not a virtual-value function: because the principal can perfectly evaluate the consequences of every reported set, the agent earns no information rent, and there is no wedge between true and virtual surplus. Consequently, the optimal mechanism deviates from the complete-information benchmark (the first best) if and only if some promised-utility region is flattened---bunching occurs precisely when, and only when, the principal sacrifices myopic exploitation of feasible options. Second, the improvement underlying \cref{thm:optimal_promise} is distributionally robust: the same flattening operation raises the principal's payoff \emph{for every} type distribution $\dist$, since it improves the payoff pointwise in $\type$. This is not true of standard ironing, where the curvature being ironed depends on $\dist$ through the virtual value. We exploit this robustness in \cref{sec:mcs} below: it implies that the mechanisms identified in \cref{thm:optimal_promise} are exactly the undominated mechanisms in the sense of \citet{BorgersLiWang2025}.\footnote{A mechanism is dominated if another mechanism yields the principal a weakly higher payoff under every distribution $\dist$ and a strictly higher payoff under some $\dist$.}

\section{Properties and Comparative Analysis}

\subsection{Rationalizable Mechanisms}

We first show that the characterization of the optimal mechanism in \cref{thm:optimal_promise} is essentially the strongest possible without imposing further parametric assumptions on the primitives, by demonstrating that any such promised utility function can arise as the ($F$-almost surely) unique optimal mechanism under some choice of primitives.

Define a candidate promised utility function $u(T)$ to be any function of the form described in \cref{thm:optimal_promise}. Such a function is fully determined by the complete-information curve $\completeInfo(T)$ and at most $K$ departure points from it. Because candidate promised utility functions are far lower-dimensional than the space of model primitives, varying only a subset of primitives suffices to rationalize every candidate promised utility function. To illustrate the high-dimensional nature of the primitives, we consider two classes of variations, each of which can rationalize all candidate promised utility functions.
To simplify the exposition, we will assume that the complete information curve $\completeInfo(\type)$ is continuous in $\type$, and hence the optimal promised utility $\promise(\type)$ is also continuous in $\type$. 

\paragraph{Varying Distribution}

First, we fix the technology set $C(T)$, which determines the complete-information curve $\completeInfo(T)$, and vary the distribution $\dist$ over types $T\in[0,1]$. 
\begin{proposition}\label{prop:implementation_dist}
For any $C(T)$ and any candidate promised utility function $\promise(\type)$, there exists a distribution $\dist$ such that $\promise(\type)$ is $\dist$-a.s.~uniquely optimal.
\end{proposition}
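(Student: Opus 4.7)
The plan is to construct $F$ explicitly. By \Cref{lem:range_of_promised_util}, the design problem \eqref{eq_designproblem} reduces to maximizing $\int \optv(T, U(T))\, \dd F(T)$ over monotone $U$ in the monotone envelope, and since $\optv(T, \cdot)$ is concave in $u$ (as the upper boundary of the convex set $\conv(C(T))$), this is a convex program in $U(\cdot)$; strict concavity delivers uniqueness. Under the continuity assumption on $\completeInfo$, each flat level of the candidate satisfies $c_j = \completeInfo(a_j) = \completeInfo(b_j)$, with $\completeInfo \leq c_j$ on $[a_j, b_j]$, so the candidate is fully parameterized by its flat-interval endpoints $\{(a_j, b_j)\}$ and monotonicity forces $\promise \equiv c_j$ on each such interval.

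The construction proceeds interval by interval. On each $J_1$-interval I place any strictly positive density, since $\completeInfo(T)$ pointwise maximizes $\optv(T, \cdot)$ and the neighboring candidate flat levels are consistent with monotonicity on the boundaries. On each flat interval $[a_j, b_j]$, I calibrate a density $f_j$ so that $(a_j, b_j)$ is the optimal flat region under the convex program. The key observation is that, under continuity, shifting $a_j$ to $a_j + \epsilon$ forces $c_j$ to drop to $\completeInfo(a_j + \epsilon)$ and $b_j$ to shift so that $\completeInfo$ matches at the new level; the resulting marginal effect on the objective is a linear functional of $f_j$ that must vanish at the candidate endpoints. Reweighting mass within $[a_j, b_j]$ suffices to enforce this first-order condition at both $a_j$ and $b_j$, and such a density exists in abundance because the constraints are finitely many while the space of densities is infinite-dimensional.

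For $F$-a.s.\ uniqueness, strict concavity of $\optv(T, \cdot)$ (generic, or arranged by a small perturbation of $C(T)$ in an $F$-negligible direction) makes the program strictly concave in $U$, so the optimal monotone $U$ is unique, and strictly positive density on each interval rules out alternatives differing on a set of positive $F$-measure. The main obstacle is the joint nature of the first-order condition at the flat boundaries: under continuity, perturbing one endpoint simultaneously perturbs $c_j$ and the other endpoint, so the equal-area condition couples the density on $[a_j, b_j]$ to the local geometry of $\completeInfo$ near both $a_j$ and $b_j$. I would address this by an explicit FOC calibration at each endpoint, treating the two scalar conditions as constraints on the space of admissible densities, and invoking an intermediate-value argument to secure both existence and strict positivity throughout $[a_j, b_j]$.
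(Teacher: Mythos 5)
Your plan misses the one-line simplification that makes this proposition easy: the conclusion only requires $\promise(\type)$ to be \emph{$F$-a.s.} uniquely optimal, so you are free to choose $F$ that puts \emph{zero} mass on the flat intervals. The paper's proof does exactly this: take $F$ supported on the $J_1$ intervals where $\promise(\type)=\completeInfo(\type)$. On that support the candidate delivers the complete-information payoff pointwise, which is an upper bound on any mechanism's value, so $\promise$ is optimal; and under \cref{assumption_generic}, any promised utility differing from $\completeInfo$ on a positive-$F$-measure subset of the support loses strictly, giving $F$-a.s.\ uniqueness. No calibration on the flat regions is needed, because those regions are $F$-null.

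By contrast, your construction places strictly positive density on each flat interval $[a_j,b_j]$ and then tries to make the flat segment optimal via first-order conditions at the two endpoints. This has two genuine gaps. First, two scalar endpoint conditions are not sufficient for optimality of a bunching segment in this infinite-dimensional monotone program: optimality of the constant level $c_j$ requires ironing-type conditions, e.g.\ $\int_{a_j}^{t}\optv_u(T,c_j)\,\dd F(T)\ge 0$ and $\int_{t}^{b_j}\optv_u(T,c_j)\,\dd F(T)\le 0$ for \emph{every} $t\in[a_j,b_j]$, since otherwise the promise can be profitably perturbed on a subinterval while preserving monotonicity; your ``reweighting'' and intermediate-value steps are not carried out, and the coupling you acknowledge is exactly where the work lies. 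Second, your uniqueness argument leans on strict concavity of $\optv(T,\cdot)$, which you propose to obtain ``by a small perturbation of $C(T)$'' --- but the proposition quantifies over an arbitrary \emph{given} $C(T)$, so you may not perturb it, and \cref{assumption_generic} guarantees only a unique peak of $\optv(T,\cdot)$, not strict concavity away from the peak; with mass on the flat intervals the optimal flat level can then fail to be unique. (A further small error: you assert $\completeInfo\le c_j$ on $[a_j,b_j]$, but inside a flat interval the complete-information curve generally both exceeds and falls below the flat level, as in \cref{fig:promised-utility}.)
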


\paragraph{Varying Technology Set}
Second and, more importantly, we fix the complete-information curve $\completeInfo(T)$ and the distribution $\dist$ over types $T\in[0,1]$ and vary the technology set by changing the choice set $C(T)$ associated with each type $T$. Here, the type $T\in[0,1]$ represents the rank of the agent in terms of capability.

Note that the complete-information curve $\completeInfo(\type)$ is the most important object of our analysis in this section. Yet $\completeInfo$ is only a one-dimensional projection of the choice-set correspondence $C(\cdot)$: it records the agent's payoff at the principal's preferred feasible outcome but discards the rest of the geometry of $C(\type)$---in particular, how fast the principal's payoff $\optv(\type,u)$ falls as the promised utility moves away from $\completeInfo(\type)$. The following proposition shows that this discarded information is rich enough that any candidate promised utility consistent with a \emph{fixed} curve $\completeInfo$ can be made uniquely optimal by varying only the choice sets behind it. In other words, holding the picture fixed, there remains enough high-dimensional freedom in the primitives to rationalize every mechanism permitted by \cref{thm:optimal_promise}.
\begin{proposition}\label{prop:implementation_tech}
    For any complete-information curve $\completeInfo(T)$, any fully supported $\dist$, and any candidate promised utility function $\promise(\type)$, there exists a monotone choice set function $C(T)$ that is consistent with $\completeInfo(T)$, such that $\promise(\type)$ is $\dist$-a.s.~uniquely optimal.
\end{proposition}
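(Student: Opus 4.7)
The plan is to construct, for each type $T$, a convex set $\conv(C(T))$ whose upper-$v$ frontier
\[
V(T, u) = \max\{v : (u, v) \in \conv(C(T))\}
\]
is strictly concave in $u$ and peaks at $u = \completeInfo(T)$, so that the induced complete-information curve coincides with the given one. Explicitly, I would take the quadratic family
\[
V(T, u) = B(T) - \alpha(T)(u - \completeInfo(T))^2, \qquad u \in [u^{lo}(T), u^{hi}(T)],
\]
with $u^{lo}(T)$ non-increasing and $u^{hi}(T)$ non-decreasing in $T$, and with $\utilmin = u^{lo}(\{\default\}) \leq \min_T \completeInfo(T)$ so that the default-option bound $\promise(\{\default\}) \geq \utilmin$ never binds. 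Because each parabola peaks at $\completeInfo(T)$, the principal's unconstrained favorite in $C(T)$ has agent payoff $\completeInfo(T)$, and the complete-information curve induced by $C$ is precisely the given $\completeInfo$.

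Under this choice, by \cref{lem:range_of_promised_util} the design problem \eqref{eq_designproblem} reduces to the weighted isotonic regression
\[
\min_{\promise \text{ non-decreasing}} \int \alpha(T)\,(\promise(T) - \completeInfo(T))^2 \, \dd F(T),
\]
whose solution is unique by strict convexity and equals the pool-adjacent-violators output with weights $\alpha f$; on any pool $I_j$ the optimum is the $(\alpha f)$-weighted average of $\completeInfo$ over $I_j$. The core step is to pick $\alpha$ so that this algorithm produces exactly the candidate partition $\{I_j\}$ with the candidate heights $\{c_j\}$ prescribed by \cref{thm:optimal_promise}.

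Continuity of $\promise$ and the alternating structure in \cref{thm:optimal_promise} force $c_j = \completeInfo(a_j) = \completeInfo(b_j)$, so $c_j$ always lies in the range of $\completeInfo$ on $I_j$. The bunch-matching equation
\[
\int_{I_j} \alpha(T)\,(\completeInfo(T) - c_j)\, \dd F(T) = 0
\]
then admits a strictly positive continuous solution $\alpha$: take a small uniform baseline and add concentrated mass on whichever neighborhood of $I_j$ is needed to balance the signs of $\completeInfo(T) - c_j$ on the complementary portion. On non-bunch intervals $\completeInfo$ is non-decreasing, so any positive $\alpha$ leaves the pointwise optimum $\promise = \completeInfo$ untouched by the algorithm, while consecutive bunches are separated by a strictly increasing segment of $\completeInfo$ rising from $c_j$ to $c_{j+2}$, which rules out merging. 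Strict concavity of $V$ in $u$ then delivers $F$-a.s.\ uniqueness of the optimal $\promise$.

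The main obstacle is maintaining $\conv(C(T')) \subseteq \conv(C(T))$ for $T' \leq T$ while allowing $B(T)$, $\alpha(T)$, and $\completeInfo(T)$ to vary non-monotonically. I would handle this by choosing $B(T)$ to grow sufficiently steeply in $T$ that $V(T, u) \geq V(T', u)$ pointwise on the overlap of domains whenever $T' \leq T$, and by widening $[u^{lo}(T), u^{hi}(T)]$ enough to contain the full range of $\promise$; combined with $u^{lo}$ decreasing and $u^{hi}$ increasing in $T$, this yields the required set-inclusion monotonicity of $C(T)$ and simultaneously ensures $\promise(T) \in [u^{lo}(T), u^{hi}(T)]$ for every $T$.
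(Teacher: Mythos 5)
Your overall architecture---quadratic frontiers peaked at $\completeInfo(T)$, reduction of \eqref{eq_designproblem} to a weighted isotonic regression, and a steeply growing intercept $B(T)$ plus widening domains to preserve set inclusion---is close in spirit to the paper's construction, and the set-inclusion fix is sound. The genuine gap is your core existence claim for the weights: that $\int_{I_j}\alpha(T)\,(\completeInfo(T)-c_j)\,\dd F(T)=0$ always admits a strictly positive solution $\alpha$. This requires $\completeInfo(T)-c_j$ to take both signs on $I_j$, and for a general candidate it need not. Continuity only forces $c_j=\completeInfo(a_j)=\completeInfo(b_j)$, which places $c_j$ at the \emph{boundary} of the range of $\completeInfo$ on $I_j$ whenever the flat segment begins exactly at a local maximum of $\completeInfo$: then $\completeInfo(T)\le c_j$ on all of $I_j$, strictly in the interior. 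The upper monotone closure $\uppercurve$ is itself a legitimate candidate of the form in \cref{thm:optimal_promise}, and \emph{every} one of its flat segments has this property; since $F$ has full support, no positive weighting can balance the integral, and an isotonic regression (whose pool values are weighted \emph{averages}, hence strictly below the pool maximum) can never reproduce such a candidate. The failure is not an artifact of quadratics: under any frontier strictly single-peaked at $\completeInfo(T)$, replacing the candidate by $\min\{\promise(T),c_j-\epsilon\}$ costs $O(\epsilon^2)$ on the small regions near $a_j$ and $b_j$ but yields a first-order gain of order $\epsilon$ on the interior of $I_j$, so these candidates are never optimal within your class of primitives.

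This is exactly why the paper abandons strict concavity on the bunched regions: it sets $\alpha(T)=0$ on each $I_j$ with $j\in J_2$ (and $\alpha(T)=1$ on $J_1$ intervals), so the principal's payoff there is independent of the promise and the height $c_j$ is trivially optimal; uniqueness is then recovered not from curvature but from a squeeze---the $J_1$ segments are pinned down a.s.\ by strict concavity, and monotonicity plus continuity force any optimal promise to equal $c_j$ on the flat segments sandwiched between them. The paper's own footnote flags that a strictly-positive-$\alpha$ construction of your type works only ``when $\promise(\type)$ is an interior solution,'' i.e., precisely the subclass of candidates for which $\completeInfo-c_j$ changes sign on each $I_j$; your argument, even if completed, proves the proposition only for that subclass. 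A secondary, fixable omission: in the interior case you must also verify the pool-validity inequalities $\int_{a_j}^{x}\alpha(T)\,(\completeInfo(T)-c_j)\,\dd F(T)\ge 0$ for all $x\in I_j$, not just the total-integral condition (these do hold automatically when the positive part of $\completeInfo-c_j$ precedes the negative part, but that ordering should be established, not assumed).
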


This exercise highlights that although types $\type$ are fully ranked, the model retains a high-dimensional character. The ranking orders agents by capability, but substantial heterogeneity remains in the internal geometry of choice sets and payoff possibilities. Consequently, many distinct underlying primitives are consistent with the same ranked type structure and complete-information curve.

\subsection{Undominated Mechanisms}
\label{sec:undominated}

\cref{prop:implementation_dist} shows that every mechanism of the form in \cref{thm:optimal_promise} is optimal under some distribution of types. We now establish a stronger robustness property: these are exactly the mechanisms that survive a dominance test that makes no reference to any particular distribution. This is useful because the principal may not know the distribution $\dist$ of agent capabilities, and a mechanism optimal only for a finely-tuned $\dist$ offers little guarantee. Following \citet{BorgersLiWang2025}, we adopt the following criterion.

\begin{definition}[Dominance]
\label{def:undominated}
A feasible and incentive-compatible promised utility $\promise$ is \emph{dominated} by another feasible and incentive-compatible promised utility $\promise'$ if $\promise'$ yields the principal a weakly higher expected payoff under every distribution $\dist$ and a strictly higher expected payoff under some distribution $\dist$. A mechanism is \emph{undominated} if no feasible and incentive-compatible mechanism dominates it.
\end{definition}

\begin{proposition}[Undominated mechanisms]
\label{prop:undominated}
A mechanism is undominated if and only if it takes the form described in \cref{thm:optimal_promise}: the type space partitions into intervals on each of which the promised utility either coincides with the complete-information curve $\completeInfo(\type)$ or is constant.
\end{proposition}

\Cref{prop:undominated} sharpens the sense in which \cref{thm:optimal_promise} identifies the ``right'' class of mechanisms. To understand it, note that
the flattening operation in the proof of \cref{thm:optimal_promise} improves the principal's payoff \emph{pointwise} in the type, $\optv(\type,\hat{\promise}(\type))\geq \optv(\type,\promise(\type))$ for every $\type$. This comparison does not depend on $\dist$ because the objective function is the principal's true value function instead of a virtual value function modified by the distribution $\dist$, as ironing in the classical screening problem would. This distribution-free improvement gives one direction of the result; the rationalization result of \cref{prop:implementation_dist} gives the other.

\begin{proof}
\emph{(Only if.)} Let $\promise$ be feasible, incentive compatible, and not of the form in \cref{thm:optimal_promise}. By \cref{lem:range_of_promised_util}, we may take $\promise$ weakly increasing and within the monotone envelope. The construction in the proof of \cref{thm:optimal_promise} yields a weakly increasing $\hat{\promise}$ of the bang--bang form, in the monotone envelope, with $\optv(\type,\hat{\promise}(\type))\geq \optv(\type,\promise(\type))$ for every $\type$ and strict inequality on an interval of positive length, where $\promise$ lies strictly between $\completeInfo$ and the relevant constant so that \eqref{eq:singlepeakedV} and \cref{assumption_generic} make the gain strict. Since this improvement is pointwise in $\type$, $\hat{\promise}$ raises the principal's expected payoff weakly under every $\dist$ and strictly under any $\dist$ charging that interval. Hence $\promise$ is dominated.

\emph{(If.)} Let $\promise$ be of the form in \cref{thm:optimal_promise}. By \cref{prop:implementation_dist}, there exists a distribution $\dist$ under which $\promise$ is $\dist$-a.s.\ uniquely optimal. If some feasible, incentive-compatible $\promise'$ dominated $\promise$, it would have to yield the principal a weakly higher expected payoff under every distribution, in particular under this $\dist$; but $\promise$ is the unique maximizer at $\dist$, so any $\promise'$ differing from $\promise$ on a positive-measure set gives strictly lower expected payoff at $\dist$, a contradiction. Hence $\promise$ is undominated.
\end{proof}

\subsection{Monotone Comparative Statics}
\label{sec:mcs}

Next, we derive monotone comparative statics for the principal and the agent's optimal payoff.
Throughout, we maintain Assumption~\ref{assumption1} and identify types with $\type\in[0,1]$
so that $\type\le \type'$ if and only if the technology set of type $\type$ is contained in that of type $\type'$.

For a type distribution $\dist$ on $[0,1]$, let $\opt(F)$ denote the value of \eqref{eq_designproblem} and let $\promise^*$ be the corresponding optimal promised utility.

\paragraph{Technology Expansions}

We first consider a comparative static that improves the agent's feasible technologies \emph{type-by-type}.
In many applications, technological progress, improved institutional capacity, or regulatory changes expand the
set of feasible technologies available to each agent. Importantly, such expansions may be heterogeneous across
types and therefore may change the support of the type distribution; moreover, even if the baseline type space is
totally ordered by set inclusion (Assumption~\ref{assumption1}), the \emph{expanded} technology sets need not remain
nested. Nonetheless, the principal's optimal expected payoff is monotone under such expansions.

\begin{definition}[Technology expansion dominance]\label{def:tech_dominance}
We say that $\hat F$ \emph{dominates} $\dist$ in the sense of technology expansion, denoted $\hat F\succeq F$,
if there exists a measurable map $\phi$ such that
\[
\phi(\type)\supseteq \type \quad\text{for all }\type,
\qquad\text{and}\qquad
\hat \type=\phi(\type)\ \text{ when }\ \type\sim F,
\]
so that $\hat F$ is the distribution of $\hat\type$ induced by $\phi$.
\end{definition}

\begin{proposition}[Principal's payoff under technology expansion]\label{prop:dominance}
If $\hat F\succeq F$ and $\dist$ satisfies Assumption~\ref{assumption1}, then $\opt(\hat F)\ge \opt(F)$.
\end{proposition}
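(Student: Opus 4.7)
The strategy is to lift the $F$-optimal mechanism via the expansion map $\phi$ and show that the resulting construction is feasible and attains at least $\opt(F)$ under $\hat F$. Let $(a^*,U^*)$ denote the $F$-optimal mechanism, so that $\opt(F)=\int V(T,U^*(T))\,\dd F(T)$. Because $\mathcal{T}$ is totally ordered by inclusion and compact (\cref{assumption1}), for every report $T'$ such that $\{T\in\mathcal{T}\colon T\subseteq T'\}$ is nonempty there is a largest such element, denoted $\tau(T')$; reports with no such $\tau(T')$ are met with a shoot-the-agent punishment returning at most $\utilmin$. I would then define the $\hat F$-mechanism by $\hat a(T'):=a^*(\tau(T'))$, which is feasible because $a^*(\tau(T'))\in\conv(C(\tau(T')))\subseteq\conv(C(T'))$ by the monotonicity of $C$ in set inclusion.

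Next I would verify incentive compatibility under $\hat F$. A type $\hat T$ who reports $T'\subseteq\hat T$ obtains utility $U^*(\tau(T'))$. Since $\tau$ is weakly monotone in set inclusion and $U^*$ is weakly increasing on the chain $\mathcal{T}$ by \cref{lemma_Monotone}, reporting $\hat T$ itself is optimal, and off-path punishments yielding at most $\utilmin$ cannot beat $U^*(\tau(\hat T))\ge\utilmin$. Writing $\tilde T:=\tau(\phi(T))$ for $T\sim F$, the principal's expected payoff equals $\int V(\tilde T,U^*(\tilde T))\,\dd F(T)=\int g(T)\,\dd F'(T)$, where $g(T):=V(T,U^*(T))$ and $F':=(\tau\circ\phi)_*F$ denotes the pushforward distribution. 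Because $T\in\mathcal{T}$ is always a candidate in the max defining $\tilde T$ (as $T\subseteq\phi(T)$), we have $\tilde T\supseteq T$ pointwise, so $F'$ first-order stochastically dominates $F$ on the chain $\mathcal{T}$.

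The main obstacle is the final step: showing $\int g\,\dd F'\ge\int g\,\dd F=\opt(F)$, which under the obtained FOSD reduces to monotonicity of $g$ in $T\in\mathcal{T}$. I would prove this via the bang--bang characterization of \cref{thm:optimal_promise}. On any interval where $U^*=\completeInfo$, $g(T)=V(T,\completeInfo(T))$ is monotone because enlarging $C(T)$ weakly raises its maximum $v$-coordinate; on any interval where $U^*\equiv c$ is constant, $g(T)=V(T,c)$ is likewise monotone because enlarging $C(T)$ weakly expands the set of feasible $v$ at $u=c$. Continuity across transitions follows from optimality: at the entry of a flat interval, the IC lower bound on the flat level is exactly $\completeInfo$ at the entry point, and if $c$ strictly exceeded this bound one could decrease $c$ while preserving monotonicity of $U^*$ and, by concavity of $V(T,\cdot)$ with peak at $\completeInfo(T)$, strictly raise $V(T,c)$ throughout the interior, contradicting optimality. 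On exit from a flat interval, $U^*$ can only jump upward to meet $\completeInfo$, and this can only raise $g$. Stitching these pieces together yields globally increasing $g$, after which FOSD delivers $\opt(\hat F)\ge\opt(F)$.
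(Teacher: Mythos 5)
Your overall architecture coincides with the paper's proof: your $\tau$ is exactly the paper's projection $\pi(S)=\sup\{\type:\type\subseteq S\}$, the lifted mechanism $\hat a(S)=a^*(\tau(S))$ is the paper's construction, and your feasibility and incentive-compatibility checks are correct. Both arguments then reduce to the same key claim: the interim principal payoff $g(\type)=\optv(\type,\promise^*(\type))$ induced by some $F$-optimal promise is weakly increasing along the baseline chain. The gap lies in how you establish this claim. You derive it from the bang--bang structure of \cref{thm:optimal_promise} plus a boundary condition at the entry of each flat interval, and that derivation fails. The perturbation you propose---if the flat level $c$ strictly exceeds $\completeInfo$ at the entry point, ``decrease $c$ \ldots{} and strictly raise $\optv(\type,c)$ throughout the interior''---is incorrect: a flat segment typically passes \emph{under} a hump of $\completeInfo$ (this is precisely why flats arise; see \cref{fig:promised-utility}), so for interior types with $\completeInfo(\type)>c$ the promise sits \emph{below} the peak of the concave map $u\mapsto\optv(\type,u)$, and lowering $c$ strictly \emph{reduces} the principal's payoff there. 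The optimal flat level is an interior optimum trading off types on both sides of the hump, so no entry condition $c=\completeInfo(a)$ follows from a global shift of $c$.

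Moreover, the claim itself is false at the generality you need. \cref{thm:optimal_promise} asserts existence of an optimal promise of bang--bang form; it does not select one, and when $F$ lacks full support (or $\completeInfo$ is discontinuous) there exist optimal bang--bang promises that jump strictly above $\completeInfo$ upon entering a flat region---on a gap of the support such a jump costs the principal nothing---and at such an entry point $g$ strictly drops, so your final step $\int g\,\dd F'\ge\int g\,\dd F$ breaks down for expansions $\phi$ that push mass into that region. (Your case analysis also omits transitions between two adjacent constant intervals at different levels, where $g$ can likewise fall.) This is exactly the difficulty the paper's \cref{lem:increasing_principal_v} removes: starting from \emph{any} optimal promise $\promise$, define $\bar\type(\type)\in\argmax_{\type'\le\type}\optv(\type',\promise(\type'))$ (largest maximizer) and replace $\promise$ by $\hat\promise(\type)=\promise(\bar\type(\type))$; this is monotone, feasible by set inclusion, incentive compatible, weakly improving, and makes $g$ a running maximum, hence increasing---with no appeal to continuity, full support, or the bang--bang structure. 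Your proof becomes correct once your third paragraph is replaced by such a selection argument (or if you add full support of $F$ and continuity of $\completeInfo$ as hypotheses and redo the entry analysis with a local running-max perturbation rather than a global shift of $c$).
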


Definition~\ref{def:tech_dominance} allows the post-expansion distribution $\hat F$ to violate the set-inclusion order:
$\phi$ may expand technologies in a type-dependent way, so that $\phi(\type)$ and $\phi(\type')$ need not be comparable
even when $\type\subseteq \type'$. Our argument only uses that the \emph{baseline} distribution $\dist$ satisfies
Assumption~\ref{assumption1}.

The proof of Proposition~\ref{prop:dominance} uses the following technical lemma, which shows that one can select
an optimal mechanism (for the baseline problem) whose induced principal payoff is monotone along the baseline chain.

\begin{lemma}[Monotone interim principal payoff]\label{lem:increasing_principal_v}
There exists an optimal promised utility $\promise^*$ for
\eqref{eq_designproblem} such that $\type\mapsto \optv(\type,\promise^*(\type))$ is weakly increasing.
\end{lemma}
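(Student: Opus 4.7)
My plan is to exhibit an optimal $\promise^*$ with the required monotonicity by specializing the bang--bang characterization of \cref{thm:optimal_promise}. I will take the partition whose $J_1$--intervals are the maximal subsets on which $\completeInfo(\type)=\uppercurve(\type)$ and whose $J_2$--intervals are the connected components of the complement, and on each $J_2$--interval $[a,b]$ set the constant value $\promise^*(\type)=\completeInfo(a)$. I then verify both that this $\promise^*$ is optimal and that it yields a weakly increasing interim principal payoff $\optv(\type,\promise^*(\type))$.

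For within-interval monotonicity: on a $J_1$--interval, $\promise^*(\type)=\completeInfo(\type)=\max\{\utilmin,\principalOpt_1(\type)\}$, so $\optv(\type,\promise^*(\type))$ equals either $\optv(\type,\utilmin)$ or $\principalOpt_2(\type)$; both are weakly increasing in $\type$ because $C(\type)$ expands under set inclusion. On a $J_2$--interval the promise is constant at $\completeInfo(a)$, so $\optv(\type,\completeInfo(a))$ is weakly increasing in $\type$ for the same reason. For transitions: at the left endpoint $a$ of each $J_2$--interval, $\promise^*$ is continuous by construction (both sides equal $\completeInfo(a)$), so $\optv$ does not drop there. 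At the right endpoint $b$, $\promise^*$ may jump from $\completeInfo(a)$ up to $\completeInfo(b^+)$; any such jump forces $\completeInfo(b^+)>\completeInfo(a)\geq\utilmin$ and hence $\completeInfo(b^+)=\principalOpt_1(b^+)$, so the right-hand payoff $\optv(b^+,\completeInfo(b^+))=\principalOpt_2(b^+)$ weakly dominates the left-hand constrained payoff $\optv(b^-,\completeInfo(a))\leq\principalOpt_2(b^-)\leq\principalOpt_2(b^+)$.

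The main obstacle is verifying that the choice $U_0=\completeInfo(a)$ on each $J_2$--interval is consistent with optimality. The key observation is that on a $J_2$--interval the complete-information curve sits strictly below its running peak, so $\principalOpt_1(\type)\leq\completeInfo(\type)<\completeInfo(a)$ for $\type\in(a,b)$. Since $\optv(\type,\cdot)$ is concave in its second argument with maximum at $\principalOpt_1(\type)$, the integrand $\optv(\type,U_0)$ is weakly decreasing in $U_0$ on $[\completeInfo(a),\infty)$. Consequently, the principal weakly prefers the smallest $U_0$ compatible with the monotonicity constraint $U_0\geq\completeInfo(a)$ inherited from the preceding $J_1$--interval, which is exactly $U_0=\completeInfo(a)$. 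Combining this optimality check with the within-interval and across-transition arguments above shows that the constructed $\promise^*$ solves \eqref{eq_designproblem} and has weakly increasing interim principal payoff, establishing the lemma.
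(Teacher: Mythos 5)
Your construction is not optimal in general, and this is a fatal gap. What you build is exactly the upper monotone closure $\uppercurve$: on the regions where $\completeInfo(\type)=\uppercurve(\type)$ you track $\completeInfo$, and on the complementary regions you hold the promise at the running peak $\completeInfo(a)=\uppercurve(a)$. But the optimal promised utility typically departs from $\completeInfo$ \emph{strictly before} a local peak, precisely so that a \emph{lower} constant can be sustained on the subsequent decreasing region; this trade-off depends on the distribution $\dist$, which your argument never uses---a red flag, since the optimal mechanism does depend on $\dist$. (The paper's own \cref{fig:full_info_bound} illustrates this: the optimal $\promise$ in panel (b) leaves $\completeInfo$ at height $1.10$, well below the local peak $1.52$ that $\uppercurve$ attains in panel (a).) Your optimality check is circular: you claim the constant $U_0$ on a flat region must satisfy $U_0\geq\completeInfo(a)$ ``inherited from the preceding $J_1$--interval,'' but that constraint only exists because you \emph{fixed} the partition so that the promise equals $\completeInfo$ all the way up to $a$. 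The partition is endogenous; relaxing the promise below $\completeInfo$ near the end of that interval (departing earlier) permits a smaller constant afterwards, and whether that is profitable depends on $\dist$. So you have only shown your candidate is optimal within an a priori restricted class, not among all monotone feasible promises, and in general it is not.

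The paper's proof avoids choosing any specific candidate. It starts from an \emph{arbitrary} optimal promise $\promise$ and applies a running-maximum reassignment: let $\bar\type(\type)\in\argmax_{\type'\le\type}\optv(\type',\promise(\type'))$ (largest maximizer) and set $\hat\promise(\type)=\promise(\bar\type(\type))$. Monotonicity of $\bar\type(\cdot)$ and of $\promise$ gives incentive compatibility, set inclusion gives feasibility, and monotonicity of $\optv(\cdot,u)$ in the type gives $\optv(\type,\hat\promise(\type))\ge\max_{\type'\le\type}\optv(\type',\promise(\type'))$, so the interim principal payoff becomes a running maximum (hence weakly increasing) while the objective weakly improves---yielding an optimal promise with the desired property without ever having to identify what the optimum looks like. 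If you want to salvage your approach, you would need to start from a genuinely optimal bang--bang solution of \cref{thm:optimal_promise} (with its $\dist$-dependent departure points) and then prove monotonicity of the interim payoff across its transitions; but at the downward jumps of $\completeInfo$ onto a flat segment this again requires an argument, and the paper's reassignment trick is the clean way to supply it.
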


To prove \cref{prop:dominance}, intuitively, the principal can directly adopt the optimal promised utility $\promise^*$ for the dominated distribution $\dist$,
selected so that the induced principal payoff $V^*(\type)\equiv \optv(\type,\promise^*(\type))$ is weakly increasing in the type of the agent (Lemma~\ref{lem:increasing_principal_v}). 
More specifically, let $\pi(S)$ be the largest type in support of $\dist$ contained in $S$, and the mechanism promises
$\promise^*(\pi(S))$ and then chooses the payoff-maximizing outcome in $C(S)$ delivering that promised utility.
Feasibility holds because $\pi(S)\subseteq S$. Incentive compatibility holds because underreporting shrinks $S$, which
can only (weakly) decrease $\pi(S)$ and hence the promised utility. Finally, at the true expanded set $\hat\type$ we have
$\pi(\hat\type)\ge \type$, so (by monotonicity of $V^*$) the principal's payoff is at least her payoff at $\type$.
Averaging over $\type\sim F$ yields $\opt(\hat F)\ge \opt(F)$.

Technology expansion dominance is a \emph{pointwise} improvement: each realized technology set is (weakly) enlarged.
A different and standard comparison is a \emph{distributional} improvement in the scalar capability index in the sense of
first-order stochastic dominance (FOSD). The next corollary records that the principal's value is also monotone under FOSD.

\begin{corollary}[First-order stochastic dominance]\label{cor:fosd}
Suppose $\hat F$ first-order stochastically dominates $\dist$. Then $\opt(\hat F)\ge \opt(F)$.
\end{corollary}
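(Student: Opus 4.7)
The plan is to deduce \cref{cor:fosd} directly from (the argument behind) \cref{prop:dominance} by constructing a monotone coupling between $F$ and $\hat F$ via the quantile transform. Since types are identified with elements of $[0,1]$ and, under \cref{assumption1}, the linear order on $[0,1]$ coincides with set inclusion, FOSD in the scalar sense is exactly the pointwise set-inclusion comparison after coupling.

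First, I construct the coupling. Let $U$ be uniformly distributed on $[0,1]$ and set $\type \equiv F^{-1}(U)$ and $\hat \type \equiv \hat F^{-1}(U)$ using generalized inverses. Then $\type \sim F$, $\hat \type \sim \hat F$, and FOSD yields $\hat \type \geq \type$ almost surely, i.e., $\hat \type \supseteq \type$ under \cref{assumption1}. FOSD also implies $\min \supp(\hat F) \geq \min \supp(F)$, so for every realization $\hat \type \in \supp(\hat F)$ the set $\{\tau \in \supp(F) : \tau \subseteq \hat \type\}$ is nonempty, and by compactness of the nested support it attains its maximum, which I denote $\pi(\hat \type)$.

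Next, I transport the optimal mechanism from $F$ to $\hat F$, mirroring the construction in the proof sketch of \cref{prop:dominance}. Pick an optimal promised utility $\promise^*$ for $F$ such that $V^*(\type) \equiv \optv(\type, \promise^*(\type))$ is weakly increasing (\cref{lem:increasing_principal_v}). Define a candidate mechanism for the $\hat F$-problem by: given any reported $S$, set the promise to $\promise^*(\pi(S))$ whenever $\pi(S)$ is defined and to $\utilmin$ otherwise, and let the principal pick the payoff-maximizing point in $\conv(C(S))$ delivering that promise. Feasibility and monotonicity of the promise (the only binding incentive requirement, by \cref{lemma_Monotone}) follow from $\promise^*(\pi(S)) \leq \promise^*(\pi(S'))$ whenever $S \subseteq S'$ and from the achievability of $\promise^*(\pi(S))$ in $C(\pi(S)) \subseteq C(S)$.

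Finally, I bound the principal's expected payoff along the coupling. At every realization of $U$, truth-telling yields principal's payoff $\optv(\hat \type, \promise^*(\pi(\hat \type))) \geq \optv(\pi(\hat \type), \promise^*(\pi(\hat \type))) = V^*(\pi(\hat \type)) \geq V^*(\type)$, where the first inequality uses that $\optv(\cdot, u)$ is weakly increasing in $\type$ because $\conv(C(\type))$ is monotone in set inclusion, the equality is the definition of $V^*$, and the last inequality combines $\pi(\hat \type) \geq \type$ with monotonicity of $V^*$. Integrating over $U$ gives $\opt(\hat F) \geq \expect{V^*(\type)} = \opt(F)$. The only delicate point, which I expect to be the main (still routine) obstacle, is verifying measurability and well-definedness of $\pi$ on $\supp(\hat F)$; both follow from the FOSD comparison of support minima together with the compact nested structure of the supports guaranteed by \cref{assumption1}.
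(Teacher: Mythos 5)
Your proof is correct and takes essentially the same route as the paper: the paper's own argument for \cref{cor:fosd} constructs the standard monotone (quantile) coupling to show that FOSD implies technology-expansion dominance in the sense of \cref{def:tech_dominance}, and then invokes \cref{prop:dominance}. The only difference is that you re-derive the transported-mechanism argument (projection $\pi$, monotone interim payoff $V^*$ from \cref{lem:increasing_principal_v}, pointwise bound along the coupling) inline rather than citing \cref{prop:dominance} as a black box, which if anything is slightly more careful since it works realization-by-realization and never needs the coupling to be a deterministic map $\phi$.
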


Under first-order stochastic dominance, there exists a standard monotone coupling $\phi$ such that $\hat\type$ is a technological expansion of distribution $\type$ based on $\phi$. Therefore, $\hat F\succeq F$ in the sense of Definition~\ref{def:tech_dominance}. Proposition~\ref{prop:dominance} then
immediately implies $\opt(\hat F)\ge \opt(F)$, and \cref{cor:fosd} holds.

When the technologies available to the agent expand, their effects on the agent's utility are ambiguous. Intuitively, whether the agent's utility improves depends on the alignment between the principal and the agent. To demonstrate this idea, we provide a simple illustration using binary types for the agent. 

Specifically, consider the example with binary types $\type_0 \leq \type_1$. Let $q$ be the probability of $\type_1$ in the prior distribution $\dist$. Note that when $q$ increases, the distribution increases in the FOSD sense. We also assume that $\utilmin\leq \completeInfo(\type_0)$ and $\utilmin\leq \completeInfo(\type_1)$,
where $\utilmin = \min_{(u,v) \in C(\default)} u$.
We consider the impact of $q$ on the agent's utility in two cases.
\begin{itemize}
\item $\completeInfo(\type_0) < \completeInfo(\type_1)$. 
In this case, the optimal promised utility aligns with the complete information curve. When $q$ increases, the agent receives a higher promised utility with a higher probability. The ex ante payoff of the agent increases. 

\item $\completeInfo(\type_0) > \completeInfo(\type_1)$.
In this case, there is a conflict of interests in the incentives. Due to the incentive compatibility constraint, \cref{thm:optimal_promise} implies that the principal offers a constant promised utility to the agent in the optimal mechanism. Let $u_q$ be this constant utility. 
Note that in this example, $u_q$ is decreasing in~$q$. 
This is because the principal's payoff given any constant promised utility $u\in [\completeInfo(\type_1),\completeInfo(\type_0)]$ is 
\begin{align*}
(1-q)\cdot V(\type_0,u) + q\cdot V(\type_1,u)
\end{align*}
which is concave in $u$. 
Moreover, its derivative is $(1-q)\cdot V_u(\type_0,u) + q\cdot V_u(\type_1,u)$, where $V_u(\type_0,u)$ is decreasing and weakly positive, and $V_u(\type_1,u)$ is decreasing and weakly negative for $u\in [\completeInfo(\type_1),\completeInfo(\type_0)]$. 
Therefore, by increasing $q$, the maximizer $u_q$ decreases. 
The ex ante payoff of the agent decreases. 
\end{itemize}

\paragraph{Expansion of the Default Set of Technologies}
While technological expansion in general may have ambiguous effects on the agent, we show that expanding the \emph{default} set of technologies unambiguously improves the payoff of the principal while admitting an optimal selection of promised utilities that weakly lowers the agent's payoff pointwise.
To study comparative statics with respect to the default set, we allow the default set to be an arbitrary set
$\default\subseteq \techs$ and maintain that
\[
\default \subseteq T \quad \text{for every type } T \text{ in the support.}
\]
Thus, an agent of type $T$ can report any set $S$ such that $\default\subseteq S\subseteq T$, and if he discloses
nothing beyond the default, the principal can still choose a payoff in $C(\default)$.

Let
\begin{equation}\label{eq:utilmin_t0}
\utilmin(\default)\;\equiv\;\min_{(u,v)\in C(\default)} u
\end{equation}
denote the lowest payoff the principal can deliver to the agent using only default technologies. By set inclusion
and monotonicity of $C(\cdot)$, if $\default'\subseteq \default$, then $C(\default')\subseteq C(\default)$ and hence
\[
\utilmin(\default)\;\le\;\utilmin(\default').
\]

\begin{proposition}[Principal's payoff under default expansion]\label{prop:outside_option}
For any $\default'\subseteq \default$, the
principal's optimal expected payoff is weakly higher under the default set of technologies $\default$ than under $\default'$.
\end{proposition}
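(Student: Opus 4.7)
The plan is to reduce the proposition to a pure relaxation of the lower-bound constraint in the design problem \eqref{eq_designproblem}. The maintained assumption that $t_0 \subseteq T$ for every type $T$ in the support of $F$ (and hence also $t_0' \subseteq T$) ensures that the on-path choice set $C(T)$, the principal's payoff function $V(T,\cdot)$, and the feasibility bounds $\lowerfeasible(T),\upperfeasible(T)$ from \cref{lemma_feasible} are identical under the two defaults for every type in the support; only the outside-option level $\utilmin(\cdot)$ defined in \eqref{eq:utilmin_t0} differs.

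First I would observe that $\utilmin(t_0) \leq \utilmin(t_0')$. This is immediate from monotonicity of $C$ in set inclusion: $C(t_0') \subseteq C(t_0)$, and the minimum of the first coordinate over a larger set is weakly smaller.

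Next, using \cref{lemma_PromiseUtility,lemma_feasible,lemma_Monotone}, I would rewrite both mechanism-design problems in the form \eqref{eq_designproblem}. The two programs share the same objective $\int V(T,\promise(T))\, dF(T)$, the same monotonicity requirement on $\promise$, and the same feasibility envelope $\promise(T) \in [\lowerfeasible(T),\upperfeasible(T)]$ on the support. They differ only in the lower bound coming from the ``shoot-the-agent'' off-path punishment: under default $t_0$ one requires $\promise(t_0) \geq \utilmin(t_0)$, and under $t_0'$ one requires $\promise(t_0') \geq \utilmin(t_0')$. Combined with monotonicity of $\promise$, each of these translates into the pointwise constraint $\promise(T) \geq \utilmin$ on every type in the support. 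Since $\utilmin(t_0) \leq \utilmin(t_0')$, the version under $t_0$ is weakly weaker. The conclusion then follows by containment of feasible sets: any $\promise$ feasible under default $t_0'$ automatically satisfies every constraint of the program under default $t_0$ and delivers the same objective value, so the supremum under $t_0$ is weakly higher.

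The main step requiring care is justifying that the default enters the principal's program \emph{only} through $\utilmin$, not through the objective or the feasibility envelope on the support. This hinges on the maintained assumption $t_0 \subseteq T$ for every support type, which leaves $C(T)$ unaltered; a more delicate argument would be needed if the support of $F$ were allowed to shift simultaneously with the default, since then $V(T,\cdot)$ and the envelope itself would move. Given that the proposition fixes the support, this subtlety does not bite, and the argument reduces to the one-line feasibility-set containment above.
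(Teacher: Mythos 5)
Your proof is correct and follows essentially the same route as the paper's: the default technology enters the program \eqref{eq_designproblem} only through the lower bound $\utilmin(\cdot)$, which is weakly smaller under the larger default, so any promised utility that is feasible and incentive compatible under $t_0'$ remains so under $t_0$ with the same objective value. Your additional care in verifying that the objective $V(T,\cdot)$ and the feasibility envelope are unchanged (because $t_0 \subseteq T$ for every support type) makes explicit a step the paper's one-line proof leaves implicit, but the argument is the same.
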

\begin{proof}
For any $\default'\subseteq \default$, any promised utility $\promise(\type)$ that is optimal under the default set of technologies $\default'$ remains feasible and incentive compatible for the principal under the default set of technologies $\default$, as $\utilmin(\default)\;\le\;\utilmin(\default')$. Therefore, the optimal payoff of the principal is weakly higher under the default set of technologies $\default$. 
\end{proof}

\medskip

The following result illustrates the monotonicity of the optimal promised utility of the agent. Intuitively, a smaller set of defaults increases the agent's utility pointwise for all types in the optimal mechanism in order to incentivize the truthful reporting of the agent, due to the fact that a smaller set of defaults leads to a higher utility for the agent in the worst-case punishment for deviation. 

\begin{proposition}[Agent's payoff under default expansion]\label{prop:promises_default}
Fix $\default'\subseteq \default$. Let $\promise^{\default}(\cdot)$ be any optimal promised utility function in the economy with
default $\default$. Then there exists an optimal promised utility function $\promise^{\default'}(\cdot)$ in the economy with
default $\default'$ such that
\[
\promise^{\default'}(\type)\ \ge\ \promise^{\default}(\type)\qquad \forall \type.
\]
\end{proposition}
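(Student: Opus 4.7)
The plan is a standard lattice swap. Let $\promise^{t_0}$ be any optimal promised utility under default $t_0$ and let $\tilde\promise^{t_0'}$ be any optimal promised utility under default $t_0'$. Define the pointwise maximum $\promise_\vee:=\max\{\promise^{t_0},\tilde\promise^{t_0'}\}$ and pointwise minimum $\promise_\wedge:=\min\{\promise^{t_0},\tilde\promise^{t_0'}\}$. I will show that $\promise_\vee$ is itself optimal under $t_0'$; since $\promise_\vee\ge \promise^{t_0}$ by construction, setting $\promise^{t_0'}:=\promise_\vee$ delivers the claim.

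First, I observe that shrinking the default from $t_0$ to $t_0'$ shifts the complete-information benchmark upward. From $C(t_0')\subseteq C(t_0)$ we obtain $\utilmin(t_0')\ge \utilmin(t_0)$, so $\completeInfo^{t_0'}(\type)=\max\{\utilmin(t_0'),\principalOpt_1(\type)\}\ge \completeInfo^{t_0}(\type)$ pointwise; applying the running maximum and running minimum over $\{\type'\le \type\}$ preserves the inequality, giving $\uppercurve^{t_0'}\ge \uppercurve^{t_0}$ and $\lowercurve^{t_0'}\ge \lowercurve^{t_0}$ pointwise. This envelope-monotonicity is the crux of the whole argument.

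Second, I verify feasibility via \cref{lem:range_of_promised_util}. Both $\promise_\vee$ and $\promise_\wedge$ are weakly increasing as max/min of weakly increasing functions. For $\promise_\vee$: $\promise_\vee\le \max\{\uppercurve^{t_0},\uppercurve^{t_0'}\}=\uppercurve^{t_0'}$ and $\promise_\vee\ge \tilde\promise^{t_0'}\ge \lowercurve^{t_0'}$, so $\promise_\vee$ sits in the $t_0'$-monotone envelope and is feasible under $t_0'$. Symmetrically, $\promise_\wedge\le \promise^{t_0}\le \uppercurve^{t_0}$ and $\promise_\wedge\ge \min\{\lowercurve^{t_0},\lowercurve^{t_0'}\}=\lowercurve^{t_0}$, so $\promise_\wedge$ sits in the $t_0$-monotone envelope and is feasible under $t_0$.

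Third, I apply a rearrangement argument. Pointwise the multisets $\{\promise_\vee(\type),\promise_\wedge(\type)\}$ and $\{\promise^{t_0}(\type),\tilde\promise^{t_0'}(\type)\}$ coincide, so
\[
\optv(\type,\promise_\vee(\type))+\optv(\type,\promise_\wedge(\type))=\optv(\type,\promise^{t_0}(\type))+\optv(\type,\tilde\promise^{t_0'}(\type))
\]
for every $\type$. Integrating against $F$ and combining the optimality of $\promise^{t_0}$ against the feasible $\promise_\wedge$ under $t_0$ with the optimality of $\tilde\promise^{t_0'}$ against the feasible $\promise_\vee$ under $t_0'$ yields
\[
\int \optv(\type,\promise^{t_0})\,dF\ \ge\ \int \optv(\type,\promise_\wedge)\,dF\quad\text{and}\quad\int \optv(\type,\tilde\promise^{t_0'})\,dF\ \ge\ \int \optv(\type,\promise_\vee)\,dF,
\]
so summing forces both inequalities to bind and $\promise_\vee$ is optimal under $t_0'$. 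The main obstacle is isolated in step one: once envelope-monotonicity is established, no concavity, smoothness, or parametric structure of $\optv(\type,\cdot)$ is needed to close the argument, and the conclusion is therefore robust across all applications of the framework.
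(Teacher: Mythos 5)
Your lattice swap is a correct strategy and a genuinely different route from the paper's proof. The paper identifies the first type $\hat\type$ at which $\promise^{t_0'}(\hat\type)<\promise^{t_0}(\hat\type)$ and splices the two optimal promises there, using additive separability of \eqref{eq_designproblem} across types to argue that the $t_0$-tail solves the continuation problem with the inherited boundary value, so that the spliced function remains optimal under $t_0'$. Your rearrangement identity plus cross-optimality ($\promise^{t_0}$ weakly beats the $t_0$-feasible $\promise_\wedge$; $\tilde\promise^{t_0'}$ weakly beats the $t_0'$-feasible $\promise_\vee$) reaches the same conclusion without any single-crossing construction or continuation-problem reasoning, which is arguably cleaner and avoids the paper's implicit appeal to a well-defined ``largest type below $\hat\type$.'' Note, however, that the paper's splice also uses no concavity of $\optv(\type,\cdot)$, so that is not an added generality of your approach.

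However, your step two as written has a genuine gap. The proposition quantifies over \emph{any} optimal $\promise^{t_0}$, yet your feasibility check assumes $\lowercurve^{t_0}\le\promise^{t_0}\le\uppercurve^{t_0}$. \cref{lem:range_of_promised_util} guarantees only that \emph{some} optimal promised utility lies in the monotone envelope; it does not say that every optimal one does (an optimal promise can exit the envelope wherever $\optv(\type,\cdot)$ is locally flat, or on $F$-null sets of types), so the inequalities $\promise_\vee\le\uppercurve^{t_0'}$ and $\promise_\wedge\le\uppercurve^{t_0}$ do not follow. The repair is immediate and in fact renders your step one superfluous: verify feasibility directly rather than through the envelopes. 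By \cref{lemma_feasible}, $\promise^{t_0}(\type)$ and $\tilde\promise^{t_0'}(\type)$ both lie in $[\lowerfeasible(\type),\upperfeasible(\type)]$, hence so do their pointwise max and min; the max and min of weakly increasing functions are weakly increasing; and the default-report constraints hold because each component promise is bounded below by its own minimum-utility constraint (via monotonicity), so that $\promise_\vee\ge\tilde\promise^{t_0'}\ge\utilmin(t_0')$ everywhere, while $\promise_\wedge\ge\min\{\utilmin(t_0),\utilmin(t_0')\}=\utilmin(t_0)$ everywhere, using $\utilmin(t_0')\ge\utilmin(t_0)$. With these direct checks your swap argument is fully rigorous for arbitrary optimal $\promise^{t_0}$; contrary to your closing claim, the envelope monotonicity of step one is not the crux of the argument but can be deleted entirely.
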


\section{Applications}

We now return to the three applications introduced in abstract terms in \cref{subsec:applications}---managing persuasion, action elicitation, and production-technology elicitation---and develop a concrete, fully specified version of each. In every case we use the tools developed for the general model: we first compute the complete-information curve $\completeInfo(\type)$, then apply \cref{thm:optimal_promise} to characterize the optimal promised utility, and finally translate the resulting mechanism into the language of the application. These examples serve several purposes. They show that the abstract primitives---technologies, choice sets, and the nested-set order---arise naturally in economically familiar environments, and they illustrate how the complete-information benchmark, its monotone envelope, and the resulting bang--bang structure combine to deliver sharp and sometimes surprising predictions. For instance, even a completely flat promised-utility curve can generate rich economic predictions, because the application-specific structure of how that promised utility is delivered---the optimization that the general characterization deliberately abstracts away---reemerges as the substantive content of the solution.

\subsection{Managing Persuasion}
Consider a sender-receiver persuasion game \citep{kamenica2011bayesian} with binary states $\theta\in\{0,1\}$ and prior $p \triangleq \Pr(\theta=1)\in(0,1)$.
The receiver chooses whether to approve ($a=1$) or reject ($a=0$). The sender's payoff is state independent:
\[
u(a,\theta)=\mathbf{1}\{a=1\},
\]
so the sender only cares about the approval probability. The receiver receives a payoff of 1 for matching the state and a payoff of 0 otherwise: 
\[
v(a,\theta)=\mathbf{1}\{a=\theta\}.
\]
We assume $p \in (0, \frac{1}{2})$, so the principal prefers rejection at the prior.

A test $\pi_t$, indexed by $t\in[0,\frac12]$, generates a binary signal $\{s_0,s_1\}$ with accuracy $t$:
\[
\Pr[s_1\mid \theta=1]=\Pr[s_0\mid \theta=0]=\frac12+t,
\qquad
\Pr[s_1\mid \theta=0]=\Pr[s_0\mid \theta=1]=\frac12-t.
\]
Thus, larger $t$ corresponds to a more informative binary symmetric test. A sender with type $\type\in[0,\frac12]$ has access to $\pi_\type$ and all tests that are Blackwell less informative than $\pi_\type$. Within the binary-symmetric family, Blackwell dominance is equivalent to the order on accuracy, so in particular the sender with type $\type$ can use every binary test $\pi_t$ with $t\le \type$. Hence, the feasible technology sets are nested.

The default set of technologies is the uninformative test $\pi_0$. Since the receiver can always reject, the sender's payoff from the default report is $\utilmin=0$.

\paragraph{Complete-information benchmark.}
Fix a type $\type$. Under complete information, the receiver chooses the Blackwell most informative feasible experiment $\pi_\type$ and then applies the optimal approval rule.

Let $\mu_1(\type)$ and $\mu_0(\type)$ denote the posterior beliefs after $s_1$ and $s_0$ under $\pi_\type$. By the definition of test $\pi_T$, we have
\[
\mu_1(\type)=\frac{p(\frac12+\type)}{p(\frac12+\type)+(1-p)(\frac12-\type)},
\qquad
\mu_0(\type)=\frac{p(\frac12-\type)}{p(\frac12-\type)+(1-p)(\frac12+\type)}.
\]
The receiver approves if and only if approval gives a weakly higher expected payoff than rejection, i.e., $\mu\geq \frac{1}{2}$.
Since the receiver rejects at the prior and $\mu_0(\type)\le p<\frac12$ for all $\type$, she never approves after signal $s_0$. After signal $s_1$, she approves if and only~if
\[
p\Big(\frac12+\type\Big)\ge (1-p)\Big(\frac12-\type\Big).
\]
Therefore, there exists a unique cutoff
\[
\hat{\type}\equiv \frac12-p\in\Big(0,\frac12\Big)
\]
such that the receiver is indifferent after $s_1$ under $\pi_{\hat{\type}}$, and approves if $T > \hat{T}$.

It follows that if $\type<\hat{\type}$, the receiver rejects after both signals, so the sender's payoff is $0$. If $\type>\hat{\type}$, the receiver approves after $s_1$ and rejects after $s_0$, so the sender's payoff is
\[
\completeInfo(\type)= \Pr[s_1]
= p\Big(\frac12+\type\Big)+(1-p)\Big(\frac12-\type\Big)
= \frac12+(2p-1)\type.
\]
Hence, ignoring the knife-edge type $\type=\hat{\type}$, the complete-information curve is
\[
\completeInfo(\type)=
\begin{cases}
0, & \type<\hat{\type},\\[4pt]
\frac12+(2p-1)\type, & \type>\hat{\type}.
\end{cases}
\]

\paragraph{Optimal mechanism.}
In this example, $\completeInfo(\type)$ jumps up at $\hat{\type}$ and is strictly decreasing on $(\hat{\type},\frac12]$. Since $\completeInfo(\type)$ has exactly one decreasing interval, we have $K=1$. By \cref{thm:optimal_promise}, the optimal promised utility takes the form
\[
\promise^*(\type)=
\begin{cases}
0, & \type<\hat{\type},\\[4pt]
\bar u, & \type\ge \hat{\type},
\end{cases}
\]
for some constant $\bar u\in[0,\completeInfo(\hat{\type})]$.

This direct mechanism admits a simple indirect implementation, which we call a \emph{quota rule with a minimum standard}. The sender is allowed to choose any feasible experiment from his technology set, and the receiver commits ex ante to the following policy with quota $\bar u$ and minimum standard $\frac12$. Note that here the minimum standard $\frac{1}{2}$ is chosen such that the receiver is indifferent between accepting and rejecting. After observing the chosen experiment, the receiver ranks signal realizations by the induced posterior belief $\mu$ and approves greedily from the highest posterior downward, never approving any realization with $\mu<\frac12$, until either the expected approval probability reaches $\bar u$ or no further realization satisfying the minimum standard remains; if necessary, he randomizes at the marginal realization. Thus, the sender's payoff from any chosen experiment equals the smaller of $\bar u$ and the experiment's approval capacity subject to the minimum standard. When the sender is indifferent across experiments, ties are broken in favor of the receiver.

This indirect mechanism implements the same payoff as the direct mechanism with promised utility $\promise^*(\type)$. First, if $\type<\hat{\type}$, then even under the most informative feasible binary test $\pi_\type$, the posterior after signal $s_1$ is below the receiver's minimum standard. Since every feasible experiment is Blackwell less informative than $\pi_\type$, no feasible experiment can generate a posterior above the minimum standard. Therefore, the sender's payoff is $0$, coinciding with $\promise^*(\type)$.

Now consider $\type\ge \hat{\type}$. 
\begin{itemize}
\item If $\completeInfo(\type)\ge \bar u$, then the sender can attain the quota by choosing $\pi_\type$, after which the receiver approves signal $s_1$ with probability $\frac{\bar u}{\completeInfo(\type)}$ 
and rejects all $s_0$ realizations. 
This selection maximizes the receiver's utility subject to the promised utility constraint, since it only approves the highest-posterior realizations under the most informative experiment to meet the quota.

\item If $\completeInfo(\type)<\bar u$, the sender chooses a less informative feasible experiment to increase the probability of signals that satisfy the minimum standard. Since any feasible experiment is Blackwell less informative than $\pi_\type$, the sender can equivalently choose a garbling of the binary test $\pi_\type$. 
Specifically, the sender pools all $s_1$ realizations with just enough $s_0$ realizations to make the probability of the approved message equal to $\bar u$, and sends the remaining $s_0$ realizations to a rejected message. Formally, let
$\alpha(\type)\equiv \frac{\bar u-\completeInfo(\type)}{1-\completeInfo(\type)}\in(0,1]$.
The sender chooses the garbling
\[
m_1=
\begin{cases}
s_1,& \\
s_0,& \text{with probability }\alpha(\type),
\end{cases}
\qquad
m_0=\text{the remaining } s_0 \text{ realizations}.
\]
By definition, we have
\[
\Pr(m_1)=\completeInfo(\type)+\alpha(\type)(1-\completeInfo(\type))=\bar u.
\]
Because $\type\ge \hat{\type}$ and $\bar u\le P(\hat{\type})$, the posterior induced by $m_1$ is weakly above $\frac12$, so $m_1$ satisfies the minimum standard. The receiver therefore approves $m_1$ and rejects $m_0$, and the sender obtains exactly $\bar u$.
Finally, it is easy to verify that this garbling is receiver-optimal among all experiments that promise the sender utility $\bar u$.
\end{itemize}

This example illustrates the central logic of screening for choice sets in a persuasion environment. When $p<\frac12$, a more informative experiment is better for the receiver under complete information, but it is worse for the sender because the probability of a positive signal falls with test accuracy. Hence, the sender's complete-information payoff is decreasing over the informative region. The receiver therefore cannot simply reward more capable senders according to the outcome she would choose under complete information: doing so would give high-capability senders an incentive to conceal precision and mimic lower-capability senders. The optimal mechanism instead flattens the sender's promised approval probability even if the posterior belief exceeds the cutoff of $\frac{1}{2}$. 
Its indirect implementation is a quota rule with a minimum standard: the receiver guarantees a fixed approval probability whenever the sender can generate sufficiently favorable evidence, but approves only signals whose posterior meets the receiver's standard and uses the most receiver-favorable way to deliver the quota. Thus, richer feasible experiment sets need not translate into higher sender payoffs; instead, they allow the receiver to improve how approval is allocated across signal realizations while keeping the sender's incentive to disclose disciplined.

\subsection{Action Elicitation}
Consider the example of civil servants. 
In this application, there is a binary state $\theta \in \{-1,1\}$, and the probability of the state being $1$ is $q\in (0,1)$. The politician receives a private signal $s$ about the state $\theta$ such that the posterior is $\mathbb{P}(\theta=1|s)=s$. Let $G$ be the distribution over signal $s$. 
The politician always has two actions: the status quo $a_0$ and fighting with civil servants $a_f$. 
This is the default set of technologies available to the politician.
The politician's utility $v$ and the civil servant's utility $u$ under these two actions are
\begin{align*}
    &v(a_0,\theta)=0  &u(a_0,\theta)= 0;\\
    &v(a_f,\theta)=-m  &u(a_f,\theta)= -n.
\end{align*}
For the purpose of simplifying the exposition, we assume $n\geq 1$. Besides these two actions, there are potentially many actions for reforms $a_d^i$. A reform $a_d^i$ is indexed by the direction $i\in\{-1,1\}$ and the radical degree $d$. 
The politician prefers to conduct the correct reform but dislikes reforms in the wrong direction:
\begin{align*}
    v(a_d^i,\theta) = d \text{ if }i=\theta, \quad   v(a_d^i,\theta) = -d  \text{ if }i\neq \theta.
\end{align*}
The civil servant, on the other hand, is relatively biased against the politician in that they hate changes so 
\begin{align*}
    u(a_d^i,\theta) = d - \alpha d \text{ if }i=\theta, \quad   u(a_d^i,\theta) = -d  -\alpha d \text{ if }i\neq \theta.
\end{align*}

Only the civil servant privately knows how to conduct reforms, and the civil servant has a private capability type $\type\in[0,1]$ drawn according to distribution $\dist$. A civil servant with type $\type$ knows how to conduct $a_d^i$ for any $i=\{-1,1\}$ and for any $d \in [0,\type]$.

\paragraph{Complete-information benchmark.}
The solution depends crucially on the politician's capability $c_p$, where
\begin{align*}
    c_p = \int_0^1  \max \{2s-1, 1-2s \}  \dd G(s)
\end{align*}
This is the politician's payoff if she can optimally use the most radical reforms according to her own private signals. 
Since $c_p>0$, for every $\type>0$ the politician strictly prefers the most radical feasible reform $a_\type^i$ (directed by her posterior) to the status quo (payoff $0$) and to fighting (payoff $-m$); at $\type=0$, status quo and zero reform are payoff-equivalent.
The agent's payoff at the politician's preferred action is $(c_p-\alpha)\type$, while the default-punishment floor is $\utilmin=\min\{0,-n\}=-n$. Hence the complete information curve is
\begin{align*}
\completeInfo(\type) = \max\{-n,\,(c_p-\alpha)\,\type\}.
\end{align*}
We assume $n \geq \alpha-c_p$, so the floor $-n$ never binds on $\type\in[0,1]$ and $\completeInfo(\type)=(c_p-\alpha)\type$. 
Thus $\completeInfo$ is weakly increasing when $c_p\ge\alpha$ and strictly decreasing when $c_p<\alpha$, so it has at most one decreasing interval.

\paragraph{Optimal Mechanism.}
Applying \cref{thm:optimal_promise}, the optimal promised utility depends on the sign of $c_p-\alpha$.
\begin{enumerate}
\item $c_p \geq \alpha$. In this case, the complete information curve is $\completeInfo(\type)$ increasing in the agent's type by allowing the politician to optimally adopt the most radical reform for all types. 
By \cref{thm:optimal_promise}, the optimal promised utility coincides with the complete information curve. 

\item $c_p < \alpha$. In this case, the complete information curve is $\completeInfo(\type)$ strictly decreasing in the agent's type. 
By \cref{thm:optimal_promise}, the optimal promised utility is a constant function. 
To implement this constant promised utility $\bar{u}$ while maximizing the politician's payoff, 
there exists $\hat{\type}\in [0,1]$ such that 
\begin{itemize}
\item If $\type \leq \hat{\type}$, the politician commits to adopting the reform $a_{\type}^i$ with the direction $i$ that matches her posterior if and only if her posterior is sufficiently precise. 
Otherwise, the politician chooses to fight with the civil servants $a_f$. As $\type$ increases, the probability of $a_f$ decreases.

\item If $\type \geq \hat{\type}$, the politician commits to adopting the reform $a_{\type}^i$ with the direction $i$ that matches her posterior if and only if her posterior is sufficiently precise. 
Otherwise, the politician chooses the status quo $a_0$. As $\type$ increases, the probability of $a_0$ increases.
\end{itemize}
Note that in this case, the mixing probabilities are pinned down by the requirement that the civil servant's expected utility equal $\bar u$ at each type: below $\hat\type$, the reform (worth $(c_p-\alpha)\type>\bar u$ to the agent) is diluted with fighting (worth $-n<\bar u$); above $\hat\type$, it is diluted with the status quo (worth $0>\bar u$, since $\bar u<0$). The reform probability equals $1$ at $\hat\type$ and falls on both sides, so it is maximized for the middle types.
\end{enumerate}

\subsection{Business Strategies with Moral Hazard}
Consider the application of CEO compensation. The firm's output is binary, \(y\in\{0,1\}\), where success generates revenue \(1\) and failure generates revenue \(0\). The board is the principal and the CEO is the agent. The board observes only realized output, not the CEO's implemented strategy. Thus, compensation can depend on output but not directly on whether the CEO exerts effort for a productive business plan. 

The board can commit to an affine wage contract
\[
w(y)=s+by,
\]
where \(s\geq 0\) is a fixed salary and \(b\geq 0\) is a success bonus.

A business strategy is a pair \((G,k)\), where \(G\in\Delta\{0,1\}\) is a distribution over output and \(k\) is the implementation cost borne by the CEO. The firm always has access to a default strategy \((G_0,0)\), interpreted as shirking, where
\[
G_0(y=1)=p_0.
\]
In addition, a CEO with type \(\type\in[\underline{\type},\bar{\type}]\) privately has additional access to all productive business plans with success probabilities at most \(\type\):
\[
\mathcal{G}_B(\type)=\{(G_x,c):x\in[\underline{\type},\type]\},
\qquad G_x(y=1)=x,
\]
where \(c>0\) is the cost of exerting effort and \(p_0<\underline{\type}\). Thus, a higher-type CEO has access to a larger set of feasible business plans, and \cref{assumption1} is satisfied.

After the CEO reports his available business plans, the board can approve a disclosed plan and choose a wage contract to incentivize its implementation. The CEO then privately chooses whether to implement the approved productive plan or to shirk. The CEO can conceal feasible business plans, but he cannot falsify a plan that does not exist, since the success probability and implementation cost of any disclosed plan are verifiable. Hence, the CEO's private information concerns which productive plans are feasible, while moral hazard concerns the effort level he actually implements.

If the CEO reports no productive business plan, the board is left with only the default strategy. By setting \(s=b=0\), the board gives the CEO a payoff \(0\) and obtains a payoff \(p_0\). Hence,
\[
\utilmin=0.
\]

\paragraph{Complete-information benchmark.}
Fix a type \(\type\). Suppose the board wants to induce a disclosed productive business plan \(x\leq \type\). Under contract \(w(y)=s+by\), the CEO's payoff from implementing plan \(x\) is \(s+xb-c\), whereas his payoff from shirking is \(s+p_0b\). Thus, inducing the CEO's effort given plan \(x\) requires
\[
b\geq \frac{c}{x-p_0}.
\]
The board therefore chooses the minimum success bonus $b=\frac{c}{x-p_0}$.
Since a fixed salary only transfers surplus from the board to the CEO and does not affect the incentive constraint, the board sets \(s=0\) when it wants to minimize the CEO's rent.

Under this minimum-incentive contract, the CEO's payoff from plan \(x\) is
\[
\frac{cx}{x-p_0}-c
=
\frac{c p_0}{x-p_0},
\]
and the board's payoff is
\[
x-\frac{cx}{x-p_0}.
\]
The CEO's moral-hazard rent is strictly decreasing in \(x\), while the board's payoff is strictly increasing in $x$. Hence, if the board induces any productive business plan, it is optimal to induce the best available plan \(x=\type\).

The board compares inducing the best available productive plan with using the default strategy. It induces the productive plan if and only if
\[
\type-\frac{c\type}{\type-p_0}\geq p_0.
\]
Since $\type-\frac{c\type}{\type-p_0}$ is strictly increasing in $\type$,
there exists a unique cutoff $\hat{\type}$ where the above inequality holds with equality.
Thus, ignoring the knife-edge type \(\hat{\type}\), if \(\type<\hat{\type}\), the board uses the default strategy and the CEO obtains payoff \(0\). If \(\type>\hat{\type}\), the board induces the best available productive plan \(\type\), and the CEO obtains the moral-hazard rent \(\frac{c p_0}{\type-p_0}\). Therefore, the complete-information curve is
\[
\completeInfo(\type)=
\begin{cases}
0, & \type<\hat{\type},\\[6pt]
\dfrac{c p_0}{\type-p_0}, & \type>\hat{\type}.
\end{cases}
\]
The complete-information curve is flat at zero for low types, jumps up at \(\hat{\type}\), and is strictly decreasing on \((\hat{\type},\bar{\type}]\).

\paragraph{Optimal mechanism.}
In this example, \(\completeInfo(\type)\) has exactly one decreasing interval, namely \((\hat{\type},\bar{\type}]\). Thus, \(K=1\). By \cref{thm:optimal_promise}, the optimal promised utility takes the form
\[
\promise^*(\type)=
\begin{cases}
0, & \type<\hat{\type},\\[4pt]
\bar u, & \type\geq \hat{\type},
\end{cases}
\]
for some constant
\[
\bar u\in
\left[
\completeInfo(\bar{\type}),
\frac{c p_0}{\hat{\type}-p_0}
\right].
\]
If $\type < \hat{\type}$, the optimal promised utility for the CEO is $0$. In this case, the board implements the default plan and provides zero compensation to the CEO. The expected payoff to the board is $p_0$. 

Now consider $\type\ge \hat{\type}$. 
\begin{itemize}
\item If $\promise^*(\type) < \completeInfo(\type)$, approving the business plan \(\type\) under the minimum-incentive contract would give the CEO more than his promised utility. To restore the incentives for truthful reporting, the board has to approve the plan \(\type\) with a probability only
$\frac{\promise^*(\type)}{\completeInfo(\type)} < 1$,
and, conditional on approval, offers the minimum-incentive contract
\[
s=0,
\qquad
b=\frac{c}{\type-p_0}.
\]
With the remaining probability, the board uses the default strategy and offers \(s=b=0\) despite the fact that incentivizing effort under the new business plan is profitable for both parties. 

\item If $\promise^*(\type) \geq \completeInfo(\type)$, the minimum-incentive contract gives the CEO only \(\completeInfo(\type)\), which is below the promised utility. The board therefore pays an additional performance-independent salary
\[
s=\promise^*(\type)-\completeInfo(\type)
=
\promise^*(\type)-\frac{c p_0}{\type-p_0},
\]
while keeping the success bonus at
\[
b=\frac{c}{\type-p_0}.
\]
This raises the CEO's payoff to $\promise^*(\type)$. 
Although the fixed salary is immaterial for incentivizing effort, it is crucial for ensuring that the CEO discloses the best feasible business plan. 
\end{itemize}

This example illustrates the central logic of screening for choice sets in a moral-hazard environment. A more productive business plan is better for the board, but it lowers the CEO's moral-hazard rent because a smaller success bonus is needed to make the productive plan attractive relative to shirking. Hence, the CEO's complete-information payoff is decreasing over the region in which productive plans are induced. The board therefore cannot simply implement the complete-information allocation type by type: high-capability CEOs would prefer to conceal their better plans and mimic lower-capability CEOs in order to obtain larger incentive rents. The optimal mechanism instead flattens the CEO's promised utility. In CEO-compensation terms, this flattening is implemented through two familiar instruments. For intermediate disclosed plans, the board limits approval or investment probability rather than fully exploiting the plan. For highly productive disclosed plans, the board approves the plan for sure, lowers the performance-sensitive bonus, and uses a fixed salary component to maintain the promised compensation level.

\section{Conclusion} \label{sec:conclusion}
This paper studies a screening problem in which private information concerns \emph{what can be done} rather than \emph{how payoffs vary across types}. An agent privately observes which technologies (or proposals, experiments, reforms, business strategies) are feasible and can disclose only a subset to a principal. Because disclosed options are verifiable and their payoff consequences are publicly known, strategic behavior operates through concealment that shrinks the principal's opportunity set, not through information rents in the Myersonian sense. This distinction is central for organizational and regulatory settings in which principals can evaluate proposals once presented but cannot access options that are withheld.

Our model differs substantially from the conventional screening model. First, even under complete information about feasibility, the agent's private access creates a commitment problem: the principal must make withholding unattractive. The optimal benchmark therefore features severe off-path discipline (``shoot-the-agent''), implying that private access primarily pins down a minimum utility that the agent must be guaranteed on-path. Second, once private information over feasible sets is introduced (under the nested-set order), incentive compatibility collapses to a simple monotonicity requirement on the agent's promised utility. The optimal mechanism then has a stark structure: the principal either lets the agent's promised utility track the complete information benchmark, or she deliberately compresses rewards by holding the agent's promised utility locally constant.

This bang--bang structure clarifies when and why institutions exhibit reward compression and the under-use of newly disclosed options. When the complete information benchmark is increasing, the principal can fully exploit additional feasibility while still preserving truthful disclosure; incentives are sharp, and differentiation is fine. When the complete information benchmark locally declines with capability, truthful disclosure requires the principal to flatten promised utilities over a region, which in turn forces her to forgo some myopic exploitation of the richer choice set in that region. Importantly, these distortions are not arbitrary: the number of flat regions is bounded by the number of downward-sloping regions of the complete information benchmark. Economically, this means that the complexity of optimal screening is governed by the extent of conflict revealed by the complete-information benchmark, and optimal ``bunching'' is limited and structured rather than pervasive.

Beyond these specific results, we view the framework itself as the paper's main contribution. Environments in which an agent's central private information concerns the \emph{feasibility} of outcomes are pervasive: medical researchers know which treatments are worth developing, procurement managers know which suppliers are available, and bureaucrats know how to maneuver within organizational rules. Our analysis offers a tractable way to study such settings and a unified rationale for reward compression and the selective under-exploitation of disclosed options. The techniques adapt familiar one-dimensional screening ideas, but the framing opens a range of new economic questions, and we hope it provides a useful foundation for further theoretical and applied work.

\bibliographystyle{apalike}
\bibliography{ref}

\clearpage

\appendix
\section{Missing Proofs}
\subsection{Optimal Mechanisms}
\label{apx:proof}
\begin{proof}[Proof of \cref{prop_completeinformation}]
Given any mechanism, the agent's utility is at least $\utilmin$ by not reporting anything and letting the principal choose from the default set of technologies $\default$. 
Among all the choices that ensures the agent a utility at least $\utilmin$, the principal's payoff is at most 
\begin{align*}
\max_{(u,v) \in C(T), u \geq \utilmin }  v.
\end{align*}
Note that the ``shoot-the-agent'' mechanism attains this upper bound while ensuring truth telling. Therefore, it must be the optimal mechanism.
\end{proof}

\begin{proof}[Proof of \cref{cor:agent_payoff_complete}]
If $\principalOpt_1(\type) \geq \utilmin$, the ``shoot-the-agent'' mechanism optimally chooses $\principalOpt(\type)$ when receiving a truthful report of $\type$. 
The agent's utility is $\principalOpt_1(\type)$ in this case. 
If $\principalOpt_1(\type) < \utilmin$, 
since the feasible choice is convex, 
the principal's payoff when promising a utility $z\geq \utilmin\geq\principalOpt_1(\type)$ to the agent is decreasing in $z$. 
Therefore, the principal's payoff is maximized by promising a utility $\utilmin$ to the agent. 
This choice of promised utility is also feasible due to the set inclusion assumption.
Combining the two cases, \cref{cor:agent_payoff_complete} holds. 
\end{proof}

\begin{proof}[Proof of \cref{lemma_PromiseUtility}]
Given any direct mechanism $a(\type)$, consider another mechanism $\hat{a}(\type)$ such that 
$\hat{a}_1(\type) = a_1(\type) = \promise(\type)$, 
and 
\begin{align*}
\hat{a}_2(\type) = \max_v &\quad v\\
\text{s.t.} & \quad (\promise(\type), v) \in C(\type). 
\end{align*}
Since $(\promise(\type), a_2(\type))\in C(\type)$ as mechanism $a(\type)$ must be feasible, 
we have $\hat{a}_2(\type) \geq a_2(\type)$ for any type $\type$. 
Moreover, the construction that $\hat{a}_1(\type) = a_1(\type)$ ensures that mechanism $\hat{a}(\type)$ also incentivize the agent to report $\type$ truthfully.
Therefore, mechanism $\hat{a}(\type)$ is an optimal mechanism that takes the form in the statement of \cref{lemma_PromiseUtility}.
\end{proof}

\begin{proof}[Proof of \cref{lemma_feasible}]
By definition, $\promise(\type)$ is feasible if and only if there exists $v$ such that $(\promise(\type),v)\in C(\type)$. 
Since the projection of $C(\type)$ on its first coordinate is a compact and convex interval, 
letting $\lowerfeasible(\type) = \min_{(u,v)\in C(\type)} u,  \upperfeasible(\type) = \max_{(u,v)\in C(\type)} u$, 
$\promise(\type)$ is feasible if and only if $\promise(\type) \in [\lowerfeasible(\type),\upperfeasible(\type)]$.
\end{proof}

\begin{proof}[Proof of \cref{lemma_Monotone}]
We first prove the only if direction. 
Since the promised utility is incentive compatible, for any type $\type\geq\type'$, 
due to the set inclusion assumption (\cref{assumption1}), 
an agent with type $\type$ can always misreport as $\type'$ to ensure a utility of $\promise(\type')$. Therefore, $\promise(\type)$ must be weakly increasing. 
The condition that $\promise(\default)\geq\utilmin$ is implied by the feasibility. 

Now we prove the if direction. Since $\promise(\type)$ is weakly increasing, and the agent can only misreport as a lower type. Therefore, misreporting weakly lower the agent's utility, and hence the mechanism is incentive compatible. 
\end{proof}

\begin{proof}[Proof of \cref{lem:range_of_promised_util}]
We first prove feasibility of the monotone envelope. For any $\type'\leq \type$, monotonicity of the choice correspondence implies $C(\type')\subseteq C(\type)$, so $\completeInfo(\type')$ is feasible for type $\type$. Hence $\uppercurve(\type)\leq \upperfeasible(\type)$. Also, $\completeInfo(\type')\geq \utilmin$ for every $\type'$, and $\utilmin$ is feasible for every type because the default choice set is available for every report. Hence $\lowercurve(\type)\geq \utilmin\geq \lowerfeasible(\type)$. Therefore any $\promise(\type)\in[\lowercurve(\type),\uppercurve(\type)]$ is feasible by \cref{lemma_feasible}.

We next show that it is without loss to restrict attention to the monotone envelope. Let $\promise$ be any feasible and incentive-compatible promised utility. If $\promise(\hat{\type})>\uppercurve(\hat{\type})$ for some $\hat{\type}$, define
\[
\promise^1(\type)=\min\{\promise(\type),\uppercurve(\type)\}.
\]
The function $\promise^1$ is weakly increasing. Whenever $\promise^1(\type)\neq \promise(\type)$, we have $\promise(\type)>\promise^1(\type)=\uppercurve(\type)\geq \completeInfo(\type)$. Since $u\mapsto\optv(\type,u)$ is concave and is maximized at $\completeInfo(\type)$, lowering the promise down to $\uppercurve(\type)$ weakly increases the principal's payoff. Thus $\promise^1$ weakly improves on $\promise$. Similarly, replacing $\promise^1$ with
\[
\promise^2(\type)=\max\{\promise^1(\type),\lowercurve(\type)\}
\]
preserves monotonicity and feasibility and weakly improves the principal's payoff wherever the lower envelope binds. Hence it is enough to maximize over
\[
\mathcal U
=
\Big\{
\promise:[0,1]\to\reals:
\promise \text{ is weakly increasing and }
\lowercurve(\type)\leq \promise(\type)\leq \uppercurve(\type)
\text{ for every }\type
\Big\}.
\]
Next we show that the maximum exists. 
Note that $\mathcal U$ is nonempty because $\lowercurve\in\mathcal U$. It is also uniformly bounded: since $C(\type)\subseteq C(1)$ for every $\type\in[0,1]$, all feasible promised utilities in the envelope lie in the compact projection of $C(1)$ onto the agent's payoff coordinate.
For any $\promise\in\mathcal U$, the integrand $\type\mapsto\optv(\type,\promise(\type))$ is measurable: for every open set $O\subseteq\reals^2$, the set $\{\type:C(\type)\cap O\neq\emptyset\}$ is an upper interval in $[0,1]$ and hence Borel, and $\promise$ is Borel because it is monotone.

Let $(\promise_n)_{n\geq 1}$ be a maximizing sequence in $\mathcal U$:
\[
\lim_{n\to\infty}
\int \optv(\type,\promise_n(\type))\,\dd F(\type)
=
\sup_{\promise\in\mathcal U}
\int \optv(\type,\promise(\type))\,\dd F(\type).
\]
By Helly's selection theorem, since the functions $\promise_n$ are weakly increasing and uniformly bounded, there is a subsequence, still denoted $(\promise_n)$, and a weakly increasing function $\promise^*$ such that
\[
\promise_n(\type)\to \promise^*(\type)
\qquad\text{for every }\type\in[0,1].
\]
Because the envelope constraints are pointwise closed, $\promise^*\in\mathcal U$.

Fix any type $\type$. Since $C(\type)$ is compact, the value function
\[
u\mapsto \optv(\type,u)=\max\{v:(u,v)\in C(\type)\}
\]
is upper semicontinuous on its feasible domain. Therefore,
\[
\limsup_{n\to\infty}\optv(\type,\promise_n(\type))
\leq
\optv(\type,\promise^*(\type)).
\]
The integrands are uniformly bounded because $C(\type)\subseteq C(1)$ for every $\type$. Hence, by the reverse Fatou lemma,
\[
\limsup_{n\to\infty}
\int \optv(\type,\promise_n(\type))\,\dd F(\type)
\leq
\int
\limsup_{n\to\infty}\optv(\type,\promise_n(\type))\,\dd F(\type)
\leq
\int \optv(\type,\promise^*(\type))\,\dd F(\type).
\]
Thus $\promise^*$ attains the supremum over $\mathcal U$.

Finally, for each report $\type$, choose
\[
a^*(\type)\in
\argmax_{(u,v)\in C(\type)}
\{v:u=\promise^*(\type)\}.
\]
This maximizer exists by compactness. Since $\promise^*$ is weakly increasing, no type can gain by reporting a lower type. Off-path reports can be assigned a payoff in $\argmin_{(u,v)\in C(S)}u$, which gives the agent no more than $\utilmin$. Hence the resulting direct mechanism is feasible, incentive compatible, and optimal.
\end{proof}

\begin{proof}[Proof of \cref{thm:optimal_promise}]
Let $\promise^*$ be an optimal promised utility.
By \cref{lem:range_of_promised_util}, we may assume without loss that $\promise^*$ is weakly increasing and lies in the monotone envelope, hence it is feasible for every $\type$.

Fix any type $\type$. Because $C(\type)$ is convex, the function $u\mapsto \optv(\type,u)$ is concave on its feasible domain.
Moreover, $u_c(\type)=\max\{\utilmin,\principalOpt_1(\type)\}$ is the maximizer of $\optv(\type,u)$ among all feasible promised utilities above $\utilmin$ for type $\type$.
Therefore,
\begin{align}\label{eq:singlepeakedV}
\optv(\type,u)\ \text{is decreasing in $u$ on }[u_c(\type),\infty), 
\text{and increasing in $u$ on }(\utilmin,u_c(\type)].
\end{align}
Moreover, $u_c(\type)$ is unique and the above monotonicity is strict under \cref{assumption_generic}. 

\medskip

\noindent\textbf{Step 1: Bang--bang form (either $u_c$ or constant on each interval).}
Partition the type space into maximal (closed) intervals $\{\hat I_m\}_{m=1}^M$ such that on each $\hat I_m$ exactly one of the following holds:
\[
\promise^*(\type)=u_c(\type)\ \ \forall \type\in \hat I_m;
\qquad
\promise^*(\type)>u_c(\type)\ \ \forall \type\in \hat I_m;
\qquad
\promise^*(\type)<u_c(\type)\ \ \forall \type\in \hat I_m.
\]
(These are maximal in the sense that two adjacent intervals never satisfy the same relation.)

On any interval $\hat I_m$ where $\promise^*(\type)=u_c(\type)$, nothing needs to be changed.

\paragraph{Case A: $\promise^*>u_c$ on $\hat I=[\underline\type,\bar\type]$.}
Define a new promise $\tilde\promise$ on $\hat I$ by
\begin{equation}\label{eq:proj_above}
\tilde\promise(\type)\;\equiv\;
\max\Big\{\promise^*(\underline\type),\ \max_{\underline\type\le s\le \type} u_c(s)\Big\},
\qquad \type\in[\underline\type,\bar\type],
\end{equation}
and set $\tilde\promise(\type)=\promise^*(\type)$ for $\type\notin \hat I$.
Then $\tilde\promise$ is weakly increasing on $\hat I$ (it is a running maximum) and matches $\promise^*$ at the left endpoint.
Also, since $\promise^*$ is weakly increasing and $\promise^*(s)>u_c(s)$ on $\hat I$, we have
\[
u_c(\type)\ \le\ \tilde\promise(\type)\ \le\ \promise^*(\type)
\qquad \forall \type\in\hat I.
\]
By \eqref{eq:singlepeakedV}, lowering the promised utility while staying weakly above $u_c(\type)$ increases the principal's payoff pointwise, hence
\[
\optv(\type,\tilde\promise(\type))\ \ge\ \optv(\type,\promise^*(\type))\qquad \forall \type\in\hat I.
\]
Thus, replacing $\promise^*$ by $\tilde\promise$ on $\hat I$ weakly improves the objective and preserves feasibility and monotonicity.
Finally, by construction of the running maximum on $\hat I$, the function $\tilde\promise$ is either constant or coincides with $u_c$.

\paragraph{Case B: $\promise^*<u_c$ on $\hat I=[\underline\type,\bar\type]$.}
Define a new promise $\tilde\promise$ on $\hat I$ by
\begin{equation}\label{eq:proj_below}
\tilde\promise(\type)\;\equiv\;
\min\Big\{\promise^*(\bar\type),\ \min_{\type\le s\le \bar\type} u_c(s)\Big\},
\qquad \type\in[\underline\type,\bar\type],
\end{equation}
and set $\tilde\promise(\type)=\promise^*(\type)$ for $\type\notin \hat I$.
The map $\type\mapsto \min_{\type\le s\le \bar\type} u_c(s)$ is weakly increasing (a running minimum from the right), hence so is $\tilde\promise$.
Moreover, since $\promise^*(s)<u_c(s)$ on $\hat I$ and $\promise^*$ is weakly increasing,
\[
\promise^*(\type)\ \le\ \tilde\promise(\type)\ \le\ u_c(\type)
\qquad \forall \type\in\hat I.
\]
By \eqref{eq:singlepeakedV}, raising the promised utility while staying weakly below $u_c(\type)$ increases the principal's payoff pointwise, hence
\[
\optv(\type,\tilde\promise(\type))\ \ge\ \optv(\type,\promise^*(\type))\qquad \forall \type\in\hat I.
\]
Thus replacing $\promise^*$ by $\tilde\promise$ on $\hat I$ weakly improves the objective and preserves feasibility and monotonicity.
By construction, on $\hat I$ the function $\tilde\promise$ is either constant or coincides with $u_c$.

\smallskip

Applying the above replacement separately on each strict-inequality interval $\hat I_m$ yields an optimal promised utility (still denoted $\promise^*$)
such that on each maximal interval it is either equal to $u_c(\type)$ or constant.
Let $\{I_j\}_{j\in J_1\cup J_2}$ be the resulting maximal intervals, where $J_1$ indexes those with $\promise^*(\type)=u_c(\type)$
and $J_2$ indexes those on which $\promise^*$ is constant. This proves items (1)--(2) in the theorem.

Moreover, by maximality, between any two distinct $J_1$-intervals there must be at least one $J_2$-interval, so $|J_1|-1\le |J_2|$.

\medskip

\noindent\textbf{Step 2: Bounding the number of constant intervals by $K$.}
Let $\{D_k\}_{k=1}^K$ be the collection of maximal intervals on which $u_c(\cdot)$ is strictly decreasing. 
The fact that $|J_2|\le K$ holds immediately by the following two observations.
\begin{enumerate}
\item \emph{Every constant interval $I_j$ with $j\in J_2$ intersects some $D_k$.}
Let $I_j=[a,b]$ and $\promise(\type)\equiv \bar u$ on $I_j$.
Since $I_j$ is in $J_2$, there exists $\hat{\type}\in(a,b)$ with $\bar u\neq u_c(\hat{\type})$.
Suppose that $u_c$ is weakly increasing. 
If $\bar u<u_c(\hat{\type})$, then $u_c(\type)>\bar u$ for all $\type \in [\hat{T},b]$. 
Increasing the promised utility in the range of $[\hat{T},b]$ by a small $\varepsilon>0$ improves the objective under \cref{assumption_generic} according to \eqref{eq:singlepeakedV}.
Similarly, if $\bar u>u_c(\hat{\type})$, an analogous small decrease in a left-neighborhood improves the principal's payoff.
Therefore, $u_c$ cannot be weakly increasing on $I_j$, so it must strictly decrease somewhere on $I_j$,
and hence $I_j$ intersects with at least one maximal strictly decreasing interval $D_k$.

\item
\emph{Each $D_k$ intersects at most one constant interval $I_j$ with $j\in J_2$.}
Suppose instead that some $D_k$ intersects two distinct constant intervals.
Since $D_k$ is connected and $\promise$ is weakly increasing, there exists $\tau\in D_k$ at which $\promise$ jumps from $\bar u_L$ on the left
to $\bar u_R>\bar u_L$ on the right (both constants on neighborhoods contained in $D_k$).
Because $u_c$ is strictly decreasing on $D_k$, either $\bar u_L<u_c(\tau)$ or $\bar u_R>u_c(\tau)$ (indeed at least one must hold since $\bar u_L<\bar u_R$).
If $\bar u_L<u_c(\tau)$, then for $\type$ just left of $\tau$ we have $\bar u_L<u_c(\type)$, so increasing $\promise$ slightly there (not exceeding $\bar u_R$)
preserves monotonicity and, by \eqref{eq:singlepeakedV}, strictly raises the objective under \cref{assumption_generic}.
If $\bar u_R>u_c(\tau)$, then for $\type$ just right of $\tau$ we have $\bar u_R>u_c(\type)$, so decreasing $\promise$ slightly there (not below $\bar u_L$)
preserves monotonicity and strictly raises the objective.
Either way we contradict optimality. Hence $D_k$ can intersect at most one constant interval.
\end{enumerate}

Combining the bounds gives $|J_1|-1\le |J_2|\le K$, completing the proof.
\end{proof}

\subsection{Properties and Comparative Analysis}
\begin{proof}[Proof of \cref{prop:implementation_dist}]
Given any promised utility $\promise(\type)$, let $\{I_j\}_{j\in J_1}$ be the partitions of intervals such that $\promise(\type)=\completeInfo(\type)$ for any $j\in J_1$ and $\type\in I_j$. 
Let $\dist$ be a distribution with support in $\{I_j\}_{j\in J_1}$. 
By definition, $\promise(\type)$ implements the first best for distribution $\dist$. 
Therefore, $\promise(\type)$ is optimal for distribution $\dist$.
Moreover, under \cref{assumption_generic} where the optimal promised utility is unique for all types of the agent, any deviation of the promised utility for a set of types with positive measure in the support of $\dist$ leads to a strict utility loss for the principal. Therefore, $\promise(\type)$ is $\dist$-a.s.~uniquely optimal.
\end{proof}

\begin{proof}[Proof of \cref{prop:implementation_tech}]
We construct the choice set as 
\begin{align*}
C(T) = \{(u,v): u \in [\min\{\underline{u},\min_{\type} \completeInfo(\type)\}, \max_{\type} \completeInfo(\type)], 
v \in [0, H(T) - \alpha(T)\cdot(u-\completeInfo(\type))^2]\}
\end{align*}
where $\alpha(\type)$ is a non-negative function that is to be determined later, and $H(\type)$ is the parameter chosen to maintain the set inclusion assumption. 

Given any promised utility $\promise(\type)$, let $\{I_j\}_{j\in J_1}$ be the partitions of intervals such that $\promise(\type)=\completeInfo(\type)$ for any $j\in J_1$ and $\type\in I_j$, and  
let $\{I_j\}_{j\in J_2}$ be the partitions of intervals such that $\promise(\type)$ is a constant for any $j\in J_2$ and $\type\in I_j$
Let $\alpha(\type) = 1$ for any type $\type\in I_j$ with $j\in J_1$, and let $\alpha(\type) = 0$ for any type $\type\in I_j$ with $j\in J_2$. 
That is, the optimal utility of the principal is independent of the promised utility of the agent for any type $\type\in I_j$ with $j\in J_2$.\footnote{It is also possible to construct parameters $\alpha(\type) > 0$ for all types such that \cref{assumption_generic} is satisfied, and the candidate $\promise(\type)$ remains $\dist$-a.s.~uniquely optimal when $\promise(\type)$ is an interior solution. The proof of that construction is more involved and hence omitted here to simplify the exposition.} 
Note that in our construction, $\promise(\type)$ implements the first best given the distribution $\dist$. 
Therefore, $\promise(\type)$ is the optimal promised utility. 
Moreover, in our construction, the promised utility for types $\type\in I_j$ with $j\in J_1$ is uniquely pinned down, up to a set with zero measure. 
The continuity of the complete information curve $\completeInfo(\type)$ and the monotonicity of the promised utility function then imply that the optimal promised utilities for types $\type\in I_j$ with $j\in J_2$ are pinned down as well. 
Therefore, $\promise(\type)$ is $\dist$-a.s.~uniquely optimal.
\end{proof}

\begin{proof}[Proof of \cref{lem:increasing_principal_v}]
Let $\promise(\cdot)$ be any feasible and incentive-compatible promised utility.
Define
\[
\bar\type(\type)\in \arg\max_{\type'\le \type}\optv(\type',\promise(\type')),
\]
breaking ties by selecting the \emph{largest} maximizer, and define a new promised utility
\[
\hat\promise(\type)\;\equiv\;\promise(\bar\type(\type)).
\]
Because the maximization set $\{\type'\le \type\}$ expands with $\type$, the tie-breaking rule implies that
$\bar\type(\type)$ is weakly increasing, and since $\promise(\cdot)$ is weakly increasing, $\hat\promise(\cdot)$ is
weakly increasing as well.

Feasibility follows from set inclusion: $\bar\type(\type)\le \type$ implies $C(\bar\type(\type))\subseteq C(\type)$, hence
$[\lowerfeasible(\bar\type(\type)),\upperfeasible(\bar\type(\type))]\subseteq [\lowerfeasible(\type),\upperfeasible(\type)]$.
Since $\promise(\bar\type(\type))$ is feasible for $\bar\type(\type)$, it is also feasible for $\type$.
Therefore $\hat\promise$ is feasible and incentive compatible.

Finally, for every $\type$,
\[
\optv(\type,\hat\promise(\type))
=\optv(\type,\promise(\bar\type(\type)))
\ge \optv(\bar\type(\type),\promise(\bar\type(\type)))
=\max_{\type'\le \type}\optv(\type',\promise(\type')),
\]
where the inequality uses that for fixed $u$, $\optv(\type,u)$ is weakly increasing in $\type$
because $C(\type')\subseteq C(\type)$ whenever $\type'\le \type$.
Thus $\type\mapsto \optv(\type,\hat\promise(\type))$ is the running maximum of
$\type\mapsto \optv(\type,\promise(\type))$ and hence weakly increasing.

Taking expectations under $\dist$ gives
$\int \optv(\type,\hat\promise(\type))\,dF(\type)\ge \int \optv(\type,\promise(\type))\,dF(\type)$.
Applying this construction to an optimal $\promise$ yields an optimal $\promise^*$ with the desired monotonicity.
\end{proof}

\begin{proof}[Proof of Proposition~\ref{prop:dominance}]
Let $\promise^*$ be optimal for the baseline distribution $\dist$ and satisfy Lemma~\ref{lem:increasing_principal_v}.
Define $V^*(\type)\equiv \optv(\type,\promise^*(\type))$, which is weakly increasing in $\type$.

Fix the expansion map $\phi$ in Definition~\ref{def:tech_dominance}. Although the support of $\hat F$ need not be nested,
we will construct a feasible and incentive-compatible mechanism for $\hat F$ that achieves at least $\opt(F)$.

Given any report $S$ (a set of technologies containing the default), define its \emph{projection onto the baseline chain}
as
\[
\pi(S)\;\equiv\;\sup\{\type:\ \type\subseteq S\}.
\]
Because baseline types form a chain under Assumption~\ref{assumption1}, $\pi(\cdot)$ is well-defined and satisfies:
(i) if $S\subseteq S'$, then $\pi(S)\le \pi(S')$; and (ii) $\pi(\type)=\type$ for baseline types.

Now define a promise rule on \emph{all} reports by
\[
\hat\promise(S)\;\equiv\;\promise^*(\pi(S)).
\]
This promise rule is weakly increasing in set inclusion because both $\pi(\cdot)$ and $\promise^*(\cdot)$ are weakly
increasing. Moreover, it is feasible: since $\pi(S)\subseteq S$, any utility feasible for $\pi(S)$ is feasible for $S$
(Lemma~\ref{lemma_feasible}), and $\promise^*(\pi(S))$ is feasible for $\pi(S)$ by construction.

Let the principal, upon report $S$, choose the payoff-maximizing point delivering $\hat\promise(S)$, i.e.,
implement payoff $\optv(S,\hat\promise(S))$. This defines a feasible and incentive-compatible mechanism for $\hat F$.

Now evaluate the principal's payoff at an expanded type $\hat\type=\phi(\type)$. Since $\phi(\type)\supseteq \type$,
we have $\pi(\hat\type)\ge \type$. Therefore,
\begin{align*}
\optv(\hat\type,\hat\promise(\hat\type))
&= \optv\!\big(\hat\type,\promise^*(\pi(\hat\type))\big)
\;\ge\; \optv\!\big(\pi(\hat\type),\promise^*(\pi(\hat\type))\big) \\
&= V^*(\pi(\hat\type))
\;\ge\; V^*(\type),
\end{align*}
where the first inequality uses $\pi(\hat\type)\subseteq \hat\type$ and monotonicity of $\optv(\cdot,u)$ in set inclusion,
and the second uses that $V^*$ is weakly increasing and $\pi(\hat\type)\ge \type$.

Taking expectations with respect to $\type\sim F$ (so that $\hat\type=\phi(\type)\sim \hat F$) yields that the constructed
mechanism attains at least $\int V^*(\type)\,\dd F(\type)=\opt(F)$ under $\hat F$. Hence $\opt(\hat F)\ge \opt(F)$.
\end{proof}

\begin{proof}[Proof of \cref{prop:promises_default}]
Fix any optimal promised utility $\promise^{\default}(\cdot)$  and $\promise^{\default'}(\cdot)$ for the default technologies $\default$ and $\default'$ respectively.
Let $\hat\type$ be the smallest type such that
$\promise^{\default'}(\hat\type) < \promise^{\default}(\hat\type)$, and let $\hat\type_-$ denote the largest type below
$\hat\type$ (so $\promise^{\default'}(\type)\ge \promise^{\default}(\type)$ for all $\type\le \hat\type_-$).\footnote{If $\hat{\type}$ is the lowest type in the support of $\dist$, we define $\hat\type_-$ as the default set of technologies $\default$.}

Consider the subproblem of \eqref{eq_designproblem} restricted to types $\type\ge \hat\type$, taking as given the
boundary condition at $\hat\type_-$: promised utilities must satisfy
\[
\promise(\type)\ \ge\ \promise^{\default}(\hat\type_-)\qquad \forall \type\ge \hat\type,
\]
and must be weakly increasing. Because the objective in \eqref{eq_designproblem} is additively separable across
types and the only cross-type restriction is monotonicity, the restriction of an optimal solution to the tail
$\{\type\ge \hat\type\}$ must solve this continuation problem given the boundary value.

In particular, the restriction of $\promise^{\default}(\cdot)$ to $\{\type\ge \hat\type\}$ is optimal for the tail problem
with the boundary constraint $\promise^{\default}(\hat\type_-)$. Since $\promise^{\default'}(\hat\type_-)\ge
\promise^{\default}(\hat\type_-)$ and $\promise^{\default'}(\cdot)$ is weakly increasing, the tail of $\promise^{\default'}(\cdot)$
is feasible for this tail problem. Hence, the tail of $\promise^{\default}(\cdot)$ yields a weakly higher principal payoff
on $\{\type\ge \hat\type\}$ than the tail of $\promise^{\default'}(\cdot)$.

Now define another promised utility
\[
\widetilde{\promise}(\type)\;=\;
\begin{cases}
\promise^{\default'}(\type), & \type\le \hat\type_-,\\
\promise^{\default}(\type), & \type\ge \hat\type.
\end{cases}
\]
By construction and the definition of $\hat\type$, we have
$\promise^{\default}(\hat\type)\ge \promise^{\default'}(\hat\type)\ge \promise^{\default'}(\hat\type_-)$, so
$\widetilde{\promise}(\cdot)$ is weakly increasing. Moreover, feasibility and the lower-bound constraint are
preserved because $\widetilde{\promise}(\cdot)$ coincides with feasible promised utilities in each region, and
$\default'\subseteq \default$ implies $\utilmin(\default')\ge \utilmin(\default)$.

Therefore, $\widetilde{\promise}(\cdot)$ is feasible and incentive compatible under default $\default'$, and it yields a
weakly higher principal payoff than $\promise^{\default'}(\cdot)$. 
Therefore, $\widetilde{\promise}(\cdot)$ remains optimal when the default set of technologies is $\default'$, and the constraint that 
$\widetilde{\promise}(\type)\ \ge\ \promise^{\default}(\type)$
is satisfied for all type $\type$.
\end{proof}

\end{document}